\newif\ifpublic
\newif\ifcomment
\newcommand{\lref}[2][]{\hyperref[#2]{#1~\ref*{#2}}}
\renewcommand{\eqref}[1]{\hyperref[#1]{(\ref*{#1})}}
\numberwithin{equation}{section}
\theoremstyle{plain}
\newtheorem{lem}{Lemma}[section]
\newtheorem{theorem}[lem]{Theorem}
\newtheorem{lemma}[lem]{Lemma}
\newtheorem{conjecture}[lem]{Conjecture}
\newtheorem{claim}[lem]{Claim}
\newtheorem{definition}[lem]{Definition}
\theoremstyle{definition}
\newcommand{\girish}[1]{\todo[color=green,size=\footnotesize]{G: #1}}
\DeclareMathOperator{\poly}{poly}
\DeclareMathOperator{\rank}{rank}
\DeclareMathOperator{\Term}{Term}
\DeclareMathOperator*{\E}{\mathbb{E}}
\newcommand{\TSAT}{3\text{-}\mathsf{SAT}}
\renewcommand{\th}{\textsuperscript{th}~}
\newcommand{\etal}{{\em et.~al.}}
\newcommand{\F}{\mathbb{F}}
\title{Reducing uniformity in Khot-Saket hypergraph coloring
  hardness reductions}
\author{
Girish Varma\thanks{Tata Institute of Fundamental Research,
   India. Supported by Google India under the Google India PhD
   Fellowship Award. Email : girishrv@tifr.res.in}
 }
\date{\today}
\begin{document}
\maketitle

\begin{abstract}
  In a recent result, Khot and Saket [FOCS 2014] proved the
  quasi-NP-hardness of coloring a $2$-colorable $12$-uniform hypergraph
  with $2^{{(\log n)}^{\Omega(1)}}$ colors. This result was proved
  using a novel outer PCP verifier which had a strong soundness
  guarantee. We  reduce the arity
  in their result by modifying their 12-query inner verifier to an
  8-query inner verifier based on the hypergraph coloring hardness
  reductions of Guruswami~\etal~[STOC 2014]. More precisely, we prove
  quasi-NP-hardness of the following problems on $n$-vertex
  hypergraphs.
\begin{itemize}
\item Coloring a $2$-colorable $8$-uniform hypergraph with $2^{(\log n)^{\Omega(1)}}$ colors.
\item Coloring a $4$-colorable $4$-uniform hypergraph with $2^{(\log n)^{\Omega(1)}}$ colors.
\end{itemize}
\end{abstract}

\section{Introduction}

The discovery of the low-degree long code aka short code by
Barak~\etal~\cite{BarakGHMRS2012} has over the last one year led to a
sequence of results improving our understanding of the hardness of
constant colorable hypergraphs with as few colors as possible. A
$k$-uniform hypergraph is a collection of vertices and hyperedges such
that every hyperedge is a subset of $k$ vertices. A hypergraph is
said to be $q$-colorable if the vertices of the hypergraph can be
colored with at most $q$ colors such that no hyperedge is
monochromatic. An independent set in a hypergraph is a collections of
vertices such that no hyperedge is wholly contained within the
collection. Note that a hypergraph is $q$-colorable iff it the set of
vertices can be partitioned into at most $q$ independent sets. 

Prior to the low-degree long code, all hardness reductions for
hypergraph coloring where proved using the long code~\cite{BellareGS1998}
which resulted in a huge disparity between the best known positive and
negative results for hypergraph coloring: the best known approximation
algorithms require at least $n^{\Omega(1)}$ colors to color a constant
colorable (hyper)graph while the inapproximability results could only
rule out at best $(\log n)^{O(1)}$ colors. The situation was redeemed by
introduction of the low-degree long code, a derandomization of the
long code, which was then adapted by Dinur and Guruswami~\cite{DinurG2013}
toward proving inapproximability results. Building on the
Dinur-Guruswami framework, Guruswami~\etal~\cite{GuruswamiHHSV2014}
showed that it is quasi-NP-hard to color a 2-colorable 8-uniform
hypergraph with $2^{2^{\Omega(\sqrt{\log \log n})}}$ colors. Both the
Dinur-Guruswami and Guruswami~\etal~results were obtained by modifying
the innermost PCP verifier to work with the low-degree long
code. Shortly thereafter, in a remarkable improvement,
Khot and Saket~\cite{KhotS2014b} showed that it is
quasi-NP-hard to color a 2-colorable 12-uniform hypergraph with
$2^{(\log n)^{\Omega(1)}}$ colors.  They obtained this result by using
an 12-query inner PCP verifier based on the quadratic code, ie., a low-degree
long code with degree two. However, to use a quadratic code based
inner verifier, they needed an outer PCP verifier with a significantly
stronger soundness guarantee than the standard outer PCP verifier
obtained from parallel repetition of the PCP Theorem. In particular,
they needed an outer PCP verifier, which in the soundness case, would
not be satisfied by a short list of proofs even in {\em
  superposition}\footnote{We will not require the exact definition of
  {\em satisfying in superposition} for this note. See
  \lref[Theorem]{thm:quad-label-cover} for the details of the Khot-Saket outer PCP
  verifier.}. The construction of this outer PCP verifier with this
stronger soundness guarantee is the main technical ingredient in the
result of Khot
and Saket~\cite{KhotS2014b}. We show that this outer PCP verifier of
Khot and Saket can in fact be combined with a 8-query inner PCP
verifier based on the Guruswami~\etal~inner PCP verifier to obtain a
hardness result for 2-colorable 8-uniform hypergraphs. More precisely, we show the following.

\begin{theorem}\label{thm:2c8u}
For every constant $\epsilon>0$ there is a quasi-polynomial time reduction from $\TSAT$ to a $8$-uniform hypergraph $\mathcal G$ on $n$ vertices such that,
\begin{enumerate}
\item YES Case:  If the $\TSAT$ instance is satisfiable then $\mathcal G$ is $2$-colorable.
\item NO Case: If the $\TSAT$ instance is unsatisfiable then $\mathcal G$ does not have an independent set of relative size $2^{-(\log n)^{\frac{1}{20} -\epsilon}}$.
\end{enumerate}
\end{theorem}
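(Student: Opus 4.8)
The plan is to follow the standard ``outer PCP composed with inner PCP'' template for hypergraph coloring hardness, instantiating the outer verifier with the Khot--Saket Quadratic Label Cover of \lref[Theorem]{thm:quad-label-cover} and the inner verifier with an $8$-query variant of the Guruswami~\etal~monochromaticity test built on the low-degree (quadratic) long code. Starting from a $\TSAT$ instance, first run the Khot--Saket reduction to obtain a quadratic label cover instance with vertex sets $U,V$, projection constraints $\pi_{v\to u}$, and the associated degree-$2$ encoding over a small field. The hypergraph $\mathcal G$ then has vertex set $\bigcup_{v\in V}\big(\{v\}\times \mathcal C_v\big)$, where $\mathcal C_v$ is the coordinate set of the low-degree long code attached to $v$; a $2$-coloring of $\mathcal G$ is identified with a family of Boolean functions $\{f_v\}_{v\in V}$ on those coordinate sets, after the usual folding (over the affine constraints defining the code and over negation). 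Hyperedges are generated by picking an outer edge $(u,v)$ together with a test input for the inner verifier and adding the $8$ queried coordinates; the uniformity $8$ is exactly the query complexity of that inner test, which is where four queries are saved relative to Khot--Saket.

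For the YES case, fix a satisfying labeling $\sigma$ of the label cover and color each block $\{v\}\times\mathcal C_v$ by the honest low-degree long-code encoding of $\sigma(v)$, a dictator-type function that is well defined thanks to folding. Completeness of the inner test --- a dictator never yields a monochromatic query pattern, since after folding the $8$ queried points always contain a complementary pair --- shows that no hyperedge of $\mathcal G$ is monochromatic, so $\mathcal G$ is $2$-colorable.

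For the NO case, suppose $\mathcal G$ has an independent set $I$ of relative size $\delta:=2^{-(\log n)^{1/20-\epsilon}}$. Averaging over $v$, a constant fraction of the blocks satisfy $|I\cap(\{v\}\times\mathcal C_v)|\ge \tfrac{\delta}{2}|\mathcal C_v|$; write $I_v\subseteq\mathcal C_v$ for that slice, a set of density $\ge\delta/2$. Independence of $I$ says precisely that for the inner verifier's test distribution the sets $\{I_v\}$ never simultaneously contain all $8$ points of a test pattern. Running the Guruswami~\etal~soundness analysis on this configuration, a density-$\delta$ set avoiding monochromatic patterns must be structured: there is a list $L_v$ of at most $\poly(1/\delta)$ coordinates of $\mathcal C_v$ --- equivalently a short list of candidate labels, i.e.\ degree-$2$ polynomials --- that ``explains'' $I_v$, and along a noticeable fraction of edges $(u,v)$ the lists $L_u,L_v$ are consistent under $\pi_{v\to u}$. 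This is exactly a family of proofs that satisfies the Khot--Saket outer verifier \emph{in superposition} with a short list at each vertex; by the strong soundness guarantee of \lref[Theorem]{thm:quad-label-cover} this is impossible as soon as $\poly(1/\delta)$ falls below the soundness threshold, which is what pins down the exponent $\tfrac{1}{20}-\epsilon$. Hence in the NO case no such independent set exists.

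I expect the main obstacle to be redoing the Guruswami~\etal~soundness analysis --- originally tailored to the Dinur--Guruswami outer verifier and its noise/covering argument --- so that (i) it uses only $8$ queries against the \emph{quadratic} long code rather than the plain long code, and (ii) the object it decodes in the soundness case genuinely lives in the degree-$2$ space, so that what is recovered is precisely a ``short list of proofs in superposition'' that can be fed into the Khot--Saket outer soundness. Matching the folding conventions, the test distribution, and the notion of a ``consistent short list'' across the two analyses, and then carefully tracking the field size, the label-cover alphabet, and the reduction's size blow-up to extract the $\tfrac{1}{20}-\epsilon$ exponent, is the technical heart; the YES case and the passage from ``no large independent set'' to ``not colorable with few colors'' are routine.
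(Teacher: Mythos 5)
Your high-level plan --- compose the Khot--Saket quadratic label cover of \lref[Theorem]{thm:quad-label-cover} with an $8$-query quadratic-code inner test \`a la Guruswami~\etal, fold over the affine subspace $\mathcal H_v$, and prove completeness via dictator encodings --- is exactly the route the paper takes, and your YES-case reasoning is essentially correct. But your NO-case description does not match the paper's soundness mechanism, and as written it has the wrong quantitative shape.

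The paper does not run a list-decoding argument that extracts, for each $v$, a list $L_v$ of $\poly(1/s)$ candidate labels and then feeds a family of short lists into a ``superposition soundness'' condition. Instead it mirrors Khot--Saket's own Section~8.2 Fourier analysis almost verbatim. Writing $\Theta$ for the probability that all $8$ queries land in the independent set, it expands $\Theta$ in the Fourier basis, uses the randomness of $F\in\F_2^{r\times r}$ to kill every term with $\pi(\alpha_1+\alpha_2)\neq\sigma(\beta_1+\beta_2)$, and splits the surviving sum into $\Theta_0+\Theta_1+\Theta_2$ according to whether $\rank(\alpha_1+\alpha_2),\rank(\beta_1+\beta_2)\le k$ and whether $\langle\beta_1+\beta_2,\overline e_m\otimes\overline e_m\rangle$ is $0$ or $1$. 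The $\Theta_0$ block is termwise nonnegative and $\ge s^8$ via the all-zero character; the $\Theta_1$ block is bounded by $\delta$ by a \emph{probabilistic single-label} decoding (sample $\beta_1+\beta_2$, a single symmetric rank-$\le k$ matrix, with probability $\widehat A_w(\beta_1)^2\widehat A_w(\beta_2)^2$, and $\pi(\alpha_1+\alpha_2)$ for $u$ via a random neighbor), which is precisely the object the soundness clause of \lref[Theorem]{thm:quad-label-cover} speaks about; and the $\Theta_2$ block is $\le 2^{-k/2-1}$ by Parseval. So the ``strong'' outer soundness enters through the rank bound $\le k$ on a single matrix label, not through the cardinality of a list, and the quantitative conclusion is $s^8\le\delta+2^{-k/2-1}$ --- the scale is $2^{-\Omega(k)}$, not $\poly(1/s)$, which is what makes the $(\log n)^{1/20-\epsilon}$ exponent come out. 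Your sketch also leaves the specific $8$-query test unstated; the particular design of $\mathcal T_{2,8}$ --- pairing queries by rank-one bumps $\overline x\otimes\overline y$, a shared random shift $F\circ\pi$ (resp.\ $F\circ\sigma$), and the $\overline e_m\otimes\overline e_m$ offset on the $w$-side, together with folding forcing the surviving characters to be symmetric and to satisfy $C_v$ --- is exactly what guarantees both the case analysis in completeness and the $\pi(\alpha_1+\alpha_2)=\sigma(\beta_1+\beta_2)$ collapse that makes the decoding legal, and none of this is visible in the proposal. You flagged the soundness analysis as the obstacle, and it is; but the obstacle is not re-deriving a Guruswami~\etal-style list-decoding lemma, it is re-running Khot--Saket's rank-bounded Fourier decomposition with four fewer queries.
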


Guruswami~\etal~\cite{GuruswamiHHSV2014} also proved how to reduce the
uniformity in certain reductions from 8 to 4 at the cost of increasing
the number of colors from 2 to 4. We note that a similar trick can be
performed in our setting to obtain the following result.

\begin{theorem}\label{thm:4c4u}
For every constant $\epsilon>0$ there is a quasi-polynomial time reduction from $\TSAT$ to a $4$-uniform hypergraph $\mathcal G$ on $n$ vertices such that,
\begin{enumerate}
\item YES Case:  If the $\TSAT$ instance is satisfiable then $\mathcal G$ is $4$-colorable.
\item NO Case: If the $\TSAT$ instance is unsatisfiable then $\mathcal G$ does not have an independent set of relative size $2^{-(\log n)^{\frac{1}{20} -\epsilon}}$.
\end{enumerate}
\end{theorem}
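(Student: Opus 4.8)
The plan is to apply to the $8$-uniform hypergraph $\mathcal G$ of \lref[Theorem]{thm:2c8u} the same uniformity-reduction transformation by which Guruswami~\etal~\cite{GuruswamiHHSV2014} passed from their $8$-uniform $2$-colorable construction to a $4$-uniform $4$-colorable one. Recall that $\mathcal G$ is obtained by composing the Khot--Saket outer PCP verifier with an $8$-query inner verifier which, as in Guruswami~\etal, is built by running two correlated copies of a single $4$-query ``core'' test on the (folded, low-degree long code) table attached to a vertex of the outer verifier; concretely, each hyperedge of $\mathcal G$ decomposes by construction as $e_1\cup e_2$, where $e_1,e_2$ are the $4$-tuples of table entries read by the two copies. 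Let $\mathcal G_4$ be the $4$-uniform hypergraph on the same vertex set as $\mathcal G$ whose hyperedges are exactly the $4$-tuples read by a single run of the core test, so that every hyperedge of $\mathcal G$ is a union of two hyperedges of $\mathcal G_4$. Since the reduction producing $\mathcal G$ runs in quasi-polynomial time, so does the one producing $\mathcal G_4$, and $\mathcal G_4$ has the same number $n$ of vertices.

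For the NO case, assume the $\TSAT$ instance is unsatisfiable and let $I$ be an independent set in $\mathcal G_4$. Given any hyperedge $e=e_1\cup e_2$ of $\mathcal G$, we have $e_1\not\subseteq I$ since $e_1$ is a hyperedge of $\mathcal G_4$, hence $e\not\subseteq I$; thus $I$ is also independent in $\mathcal G$, and the NO case of \lref[Theorem]{thm:2c8u} yields that $I$ has relative size less than $2^{-(\log n)^{\frac{1}{20}-\epsilon}}$, as required (the vertex set, and hence $n$, being common to $\mathcal G$ and $\mathcal G_4$).

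For the YES case, assume the $\TSAT$ instance is satisfiable. Here $2$-colorability of $\mathcal G$ is of no direct use, since the $2$-coloring of $\mathcal G$ need not keep every $4$-element subset of a hyperedge bichromatic; instead, following Guruswami~\etal, one colors $\mathcal G_4$ directly from the honest PCP proof. Because a single $4$-query test cannot be made complete for $2$-colorability, one uses $4=2\times 2$ colors: the first coordinate of a vertex's color is the bit that the honest proof assigns to that table entry, and the second coordinate is a bit determined by the position of that entry within its query, chosen so that on the four entries read by any core test the pair (value bit, position bit) is not constant. Checking that this assignment is well defined on every vertex and proper on every hyperedge of $\mathcal G_4$ is precisely the Guruswami~\etal\ uniformity-reduction argument, and since our core test is their core test (only the outer verifier wrapped around it has changed), that argument carries over essentially unchanged. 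Replacing ``$2$-colorable'' by ``$4$-colorable'' everywhere completes the proof.

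The step I expect to require the most care is the YES case: one must confirm that the folding conventions and the position-bit trick of Guruswami~\etal\ remain compatible once the core $4$-query test is composed with the stronger Khot--Saket outer verifier, so that the honest low-degree long code proof of that outer verifier still induces a proper $4$-coloring of $\mathcal G_4$. The NO case, by contrast, is essentially immediate from \lref[Theorem]{thm:2c8u}, since independent sets pull back along the ``union of two hyperedges'' relation.
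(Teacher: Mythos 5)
Your proposal takes a genuinely different route from the paper, but the route has a fatal flaw in the YES case, and it also misremembers what Guruswami~\etal~\cite{GuruswamiHHSV2014} actually do.

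The paper does not keep the vertex set of the $8$-uniform hypergraph. Instead it doubles the alphabet: the new vertex set is $V\times\bigl((\F_2^{m\times m}\times\F_2^{m\times m})/(\mathcal H_v\times\mathcal H_v)\bigr)$, a $4$-coloring is a collection of tables $A_v\colon(\F_2^{m\times m})^2\to\{0,1\}^2$, and the $4$-query test reads $A_v(X_1,X_2)$, $A_v(X_3,X_4)$, $A_w(Y_1,Y_2)$, $A_w(Y_3,Y_4)$ from the same distribution as $\mathcal T_{2,8}$. The honest coloring $A_v(X_1,X_2)=(\langle X_1,\bar y_v\otimes\bar y_v\rangle,\langle X_2,\bar y_v\otimes\bar y_v\rangle)$ is then a proper $4$-coloring precisely because the $8$-uniform YES analysis shows the shifts $s_1,s_2,s_3,s_4$ (the differences $x_3-x_1$, $x_4-x_2$, $y_3-y_1$, $y_4-y_2$) are never simultaneously zero, so one of the four $\{0,1\}^2$-valued entries must differ from another. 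This ``group two $\F_2$ queries into one $\F_2^2$ query'' transformation is the Guruswami~\etal\ uniformity reduction.

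Your proposal instead keeps the vertex set $V\times(\F_2^{m\times m}/\mathcal H_v)$ of $\mathcal G$ and takes the $4$-uniform hyperedges to be the $4$-tuples $\{X_1,X_2,X_3,X_4\}$ and $\{Y_1,Y_2,Y_3,Y_4\}$ obtained by splitting an $8$-edge. The NO case then follows trivially from \lref[Theorem]{thm:2c8u} as you say, but the YES case breaks. First, the ``position bit'' you propose is not a function of the vertex: a fixed vertex $(v,X)$ occurs as $X_1$ in some runs of the test, as $X_3$ in others (since $X_3=X_1+\bar x\otimes\bar y+F\circ\pi$ ranges over a coset of full support as the randomness varies), so there is no well-defined second coordinate of a coloring. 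Second, even the ``value bit'' alone fails on your $4$-edges: taking $\langle\bar y_v,\bar x\rangle=0$, $\langle\bar y_v,\bar z\rangle=0$, $f=0$, one gets $s_1=s_2=0$, so all four values $A_v(X_1),\dots,A_v(X_4)$ coincide under the honest quadratic code. (The $8$-uniform YES case only guarantees \emph{some} $s_i\neq 0$ across both sides; it never guarantees one $s_i\neq 0$ on each side.) So the hypergraph you define simply is not $4$-colorable in the YES case, and no re-examination of folding conventions will fix it. The correct fix, and the one the paper uses, is the alphabet-doubling construction rather than an edge-splitting one.
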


We remark that the analyses of the inner verifier in both the above
theorems is simpler than the analyses of the corresponding inner
verifiers in Guruswami~\etal~and Khot-Saket results. Furthermore,
in the language of covering complexity~\footnote{The covering number
  of a CSP is the minimal number of assignments to the vertices so
  that each hyperedge is covered by at least one assignment.} introduced by Guruswami, H{\aa}stad and Sudan~\cite{GuruswamiHS2000}, (the
proof of) \lref[Theorem]{thm:4c4u} demonstrates a Boolean 4CSP for
which it is quasi-NP-hard to distinguish between covering number of 2
vs. $(\log n)^{\Omega(1)}$.

\section{Preliminaries}
Let $\F$ denote the field $GF(2)$. 
Let $\F^{m\times m}$ be the vector space of $m\times m$ matrices over
the field $\F$. Our inner verifier is based on the quadratic code, which is a specialization of the low degree long code to degree $2$.
\begin{definition}[Quadratic Code]
The quadratic code of $x \in \F^m$ is a function $A_x:\F^{m\times m} \rightarrow \F$ defined as $A_x(X):= \langle X, x\otimes x \rangle$. 
\end{definition}

Our reductions makes use of the following outer PCP verifier of Khot
and Saket~\cite{KhotS2014b}. As stated in the introduction, these
instances have stronger soundness conditions which make them amenable
for composition with a quadratic code based inner verifier.
\begin{theorem}[{\cite[Theorem~7.2]{KhotS2014b}}]
\label{thm:quad-label-cover}
There is a quasi-polynomial time reduction from an instance of $\TSAT$ to a bi-regular instance $(U,V,E,\Pi)$ of Label Cover such that 
\begin{itemize}
\item Vertex sets $U$ and $V$ are bounded in size by $N$.
\item The label sets are $\F_2^{r\times r},\F_2^{m\times m}$ for $U$ and $V$ respectively.
\item For $e \in E$, the map $\pi^e:\F_2^{m\times m} \rightarrow \F_2^{r\times r}$ is a linear transformation that maps symmetric matrices to symmetric matrices\footnote{The property that $\pi$ maps symmetric matrices to symmetric matrices is easy to see from the proof of {\cite[Theorem~7.2]{KhotS2014b}}.}. For an $r\times r$ matrix $X$, $X\circ \pi^e$ is the unique $m \times m$ matrix such that $\langle X \circ \pi^e, Y \rangle = \langle X , \pi^e Y \rangle$.
\item For each vertex $v \in V$, there is a constraint $C_v$ that is a a conjunction of homogeneous linear equations on the entries of the $m \times m$ matrix label.
\item $\delta \leq  2^{- \log^{1/3} N}$ and $k \geq  (\log N)^{1/9}$.
\end{itemize}
The reduction satisfies:
\begin{enumerate}
\item Completeness : If the $\TSAT$ instance is satisfiable then there is a labeling $x_u \otimes x_u$ for $u \in U$ and $y_v \otimes y_v$ for $v\in V$ such that
\begin{itemize}
\item  for each $v \in V$, $y_v \in \F_2^m$ has the $m$\th coordinate $1$ and $y_v\otimes y_v$ satisfies the constraint $C_v$,
\item  for each $(u,v) \in E$, $\pi_{u,v}(y_v \otimes y_v) = x_u \otimes x_u$.
\end{itemize}

\item Soundness : If the $\TSAT$ instance is not satisfiable then the following cannot hold: There are symmetric matrices $M_u \in \F_2^{r\times r}, M_v \in \F_2^{m\times m}$ for $u\in U, v\in V$ of rank $\leq k$ such that 
\begin{itemize}
\item  for each $v \in V$, $M_v \in \F_2^{m \times m}$ has the $(m,m)$\th coordinate $1$ and $M_v$ satisfies the constraint $C_v$,
\item for $\delta$ fraction of edges $e$, $\pi_e(M_v) = M_u$.
\end{itemize}

\item Smoothness : For any $v \in V$ and any symmetric  non-zero matrix $M_v$ with rank $\leq k$, over a random choice of an edge $e$ incident on $v$,
$$ \Pr[\pi_e(M_v) = 0] \leq \delta/2.$$

\end{enumerate}
\end{theorem}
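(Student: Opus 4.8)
The statement is the Khot--Saket ``Quadratic Label Cover'', and the plan is to reconstruct it in the three layers that make the superposition soundness possible: (i) an algebraic outer PCP with Reed--Muller structure, (ii) a degree-$2$ long-code re-encoding that turns labels into symmetric matrices and projections into linear maps, and (iii) a list-decoding analysis of the underlying low-degree test that upgrades ordinary Label Cover soundness to soundness against superpositions of bounded-rank symmetric matrices. Bi-regularity is obtained at the end by the standard right-degree-reduction/cloning trick, so I will not track it below.

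\textbf{Starting point.} First I would take a two-query PCP for $\TSAT$ with the shape of a low-degree test: arithmetize the $\TSAT$ instance over a field $\F_q$ of characteristic $2$ so that satisfying assignments correspond to total-degree-$d$ polynomials $g$ on $\F_q^t$ obeying a prescribed system of polynomial identities. Choosing $q = d^{\Theta(1)}$, $dt$ poly-logarithmic in the $\TSAT$ size, and $q^t = N^{\Theta(1)}$, the classical affine-subspace-vs-affine-subspace Label Cover has $V$-labels = coefficient vectors of $g$ restricted to ``large'' subspaces, $U$-labels = restrictions to ``small'' subspaces, and projection $\sigma^e$ = the linear map re-expressing a restriction in terms of a coarser one. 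Two features are forced and are exactly what I need later: the instance is smooth in the Label Cover sense, because the small subspaces are chosen generically so that a nonzero degree-$d$ polynomial survives almost every restriction (Schwartz--Zippel); and the clause-consistency constraints on the $V$-side are \emph{linear} in the coefficient vector.

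\textbf{Quadratic-code layer and completeness.} Next I would compose with the quadratic code. Let $m$ (resp.\ $r$) be one more than the dimension of the $V$-side (resp.\ $U$-side) coefficient space, homogenize the honest $V$-label $y$ by a trailing $1$, and use $y \otimes y \in \F_2^{m\times m}$ as its new label; similarly on the $U$-side. Since $(\sigma^e y)\otimes(\sigma^e y) = (\sigma^e\otimes\sigma^e)(y\otimes y)$, the new projection $\pi^e := \sigma^e\otimes\sigma^e$ is linear; writing $\sigma^e$ as a matrix $P$ it acts as $M \mapsto P M P^{\mathsf T}$, so it genuinely maps symmetric matrices to symmetric matrices. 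The clause constraint $C_v$, being linear in $y$ and of degree $\le 2$ in the homogenized coordinates, re-reads as a conjunction of homogeneous linear equations in the entries of $y\otimes y$, and the homogenization reappears as ``the $(m,m)$ entry is $1$.'' Completeness is then immediate: a satisfying $g$ supplies consistent $y_v,x_u$, and tensoring preserves both consistency and the (now linear) constraints.

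\textbf{Soundness --- the main obstacle.} This is the step I expect to fight with, and it is the technical heart of Khot--Saket. Suppose rank-$\le k$ symmetric matrices $\{M_u,M_v\}$ satisfy the $C_v$'s, have $(m,m)$ entry $1$, and agree under $\pi_e$ on a $\delta$ fraction of edges (note $\pi_e(M_v)=M_u$ automatically forces $\operatorname{rank} M_u \le k$). The low-rank condition is precisely Khot--Saket's notion that $M_v$ represents a \emph{superposition of at most $k$ quadratic codewords} $y_{v,1},\dots,y_{v,k}$ (with the characteristic-$2$ alternating part absorbed into at most a constant more), and the $(m,m)=1$ normalization rules out the zero codeword. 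So the hypothesis really says: there are length-$\le k$ lists of polynomial-restrictions at each vertex, each $V$-list ``list-satisfying'' the linear clause constraints, such that on a $\delta$ fraction of edges the tensor-squared lists match. Using bi-regularity together with property~3 --- which I would establish by pushing Schwartz--Zippel through the tensor map, so that a nonzero low-rank $M_v$ cannot be killed by more than a $\delta/2$ fraction of projections --- I would run the Raz--Safra / Arora--Sudan list-decoding argument for low-degree tests: from pairwise consistency of $\le k$-element lists of degree-$d$ restrictions across a $\delta$ fraction of subspaces, extract one global total-degree-$d$ polynomial $g^\ast$ consistent with the lists on a noticeable fraction of subspaces. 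Because each $V$-list list-satisfies the linear clause constraints and $\delta$ dominates the error terms, $g^\ast$ must encode a genuine satisfying assignment, a contradiction. The delicate point --- and the reason this is the obstacle --- is that ``agrees in superposition'' must be shown to be preserved by $\pi_e$ and by further restriction, and the list size must be held at $k$ and the error at $\delta$ throughout the decoding.

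\textbf{Parameters.} Finally I would do the bookkeeping: $N$ is $q^{\Theta(t)}$ times the quadratic-code blow-up (another quasi-polynomial factor), the soundness error is of order $\operatorname{poly}(d,k)/q$, and the rank/list bound is governed by $q/d$. Taking $q \approx 2^{(\log N)^{1/3}}$ with $d,t$ chosen accordingly and $k \approx (\log N)^{1/9}$ yields $\delta \le 2^{-\log^{1/3} N}$ and $k \ge (\log N)^{1/9}$, all computable from the $\TSAT$ instance in quasi-polynomial time.
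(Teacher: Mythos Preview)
The paper does not contain a proof of this statement. Theorem~\ref{thm:quad-label-cover} is quoted directly from Khot and Saket~\cite[Theorem~7.2]{KhotS2014b} and used entirely as a black box; the present paper's contribution is the \emph{inner} verifier that is composed with this outer PCP, not the outer PCP itself. So there is no ``paper's own proof'' to compare your proposal against.

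That said, your sketch is a reasonable high-level reconstruction of the Khot--Saket architecture: an algebraic (low-degree-test-based) outer PCP whose labels are then tensored so that the honest assignment becomes rank-one symmetric matrices $y_v\otimes y_v$, the projections become $M\mapsto PMP^{\mathsf T}$, and the clause constraints become linear in the matrix entries. You correctly identify the soundness step---ruling out bounded-rank symmetric matrix labelings via a list-decoding of the underlying low-degree test---as the technical heart, and you correctly flag it as the obstacle. One caution: in the actual Khot--Saket paper this step is not a straightforward invocation of Raz--Safra/Arora--Sudan list decoding on $\le k$-element lists; handling arbitrary rank-$\le k$ symmetric $M_v$ (rather than an explicit short list of rank-one summands) over characteristic~$2$, and propagating the superposition structure through the subspace-vs-subspace test while keeping the rank bounded, requires bespoke arguments that occupy most of their paper. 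Your sketch names the right ingredients but does not yet supply the mechanism that makes ``consistency in superposition'' decode to a single global polynomial---that is the genuine gap relative to a full proof, though not relative to anything in the paper under review.
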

\ifcomment
For our reduction in the case of $3$-uniform hypergraphs, we need to construct a multi-layered label cover from the bipartite instance described above. We also need all the matrix labels to be over the field $\F_3$.
Multilayered label cover has been used widely used for proving hypergraph coloring hardness results in \cite{}. 

\begin{definition}[Multi-layered Label Cover]
 The parameter $\ell$ will denote the number of layers in the multi-layered label cover that is constructed. Let $(U,V,E,\Pi)$ be an instance of the bipartite label cover with the matrices in the label set over $\F_3$.
\begin{itemize}
\item For $0 \leq i < \ell$, the vertices in the $i$th layer are $V_i = U^{i}\times V^{\ell - i}$. 
\item The label set $\mathcal{L}_i$ for $V_i$ is  $(\F_3^{r\times r})^i \times (\F_3^{m\times m})^{\ell - i}$. We will think of $\mathcal{L}_i$ as the set of block diagonal matrices having dimension $m_i :=i\times r+ (\ell - i) m$, with the previously mentioned matrices in the diagonals.
\item 
  There is an edge between $u=(u_1,\cdots, u_\ell) \in V_i$ and $v =(v_1,\cdots, v_\ell)\in V_j$, if $(u_k,v_k)\in E$  for $k \in \{i+1, \cdots, j\} $ and $u_k = v_k$ otherwise.
  
\item 
  The projection constraint is a product of constraints for each coordinate. For $k \in \{i+1, \cdots, j\}$, it is the constraint  $\pi_{u_k,v_k} \in \Pi$ corresponding to the edge $(u_k,v_k)\in E$ and identity otherwise.
  
\end{itemize} 
\end{definition}

\begin{theorem}[$\ell$-layered Quadratic Label Cover over $\F_3$]
The construction of the $\ell$-layered Label cover satisfies:
\begin{enumerate}
\item Completeness : Given a labeling for the bipartite instance  $x_u \otimes x_u$ for $u \in U$ and $y_v \otimes y_v$ for $v\in V$ that satisfies all the constraints, the labeling $(x_{u_1} \otimes x_{u_1},\cdots,x_{u_i} \otimes x_{u_i}, y_{v_{i+1}} \otimes y_{v_{i+1}}, \cdots, y_{v_\ell} \otimes y_{v_\ell})$ for the vertex $(u_1,\cdots, u_i,v_{i+1},\cdots, v_\ell) \in V_i, \forall i \in \{0,\cdots \ell -1\}$ satisfies all the constraint in the multi-layered label cover instance.

\item Soundness : If the bipartite label cover was obtained from a unsatisfiable instance of $\TSAT$ then in the multi-layered label cover for every $0\leq i < j <\ell$, the following cannot hold: There are symmetric matrices $M_u \in \mathcal{L}_i, M_v \in \mathcal{L}_j$ for $u\in V_i, v\in V_j$ of rank $\leq k$ such that 
\begin{itemize}
\item  for any vertex $u=(u_1,\cdots,u_i,v_{i+1},\cdots,v_\ell)$, the diagonal blocks in $M_u$ corresponding to $v_{i+1},\cdots,v_\ell$ have the $(m,m)$th coordinate $1$ and satisfies the constraints $C_{v_i}$. 
\item for $\delta$ fraction of edges $e=(u,v)$, $\pi_e(M_v) = M_u$.
\end{itemize}

\item Smoothness : For any $u \in V_i$ and any symmetric  non-zero matrix $M_v \in \mathcal{L}_i $ with rank $\leq k$, over a random choice of an edge $e$ incident on $u$,
$$ \Pr[\pi_e(M_v) = 0] \leq \delta/2.$$

\item Weakly Dense : For any $\delta >0$, given $\ell' \geq 2/\delta$ layers $l_1 < l_2 < \cdots < l_{\ell'}$  and given any sets $S_i \subseteq V_{l_i}$ with $|S_{l_i}| \geq \delta |V_{l_i}|$, there always exists two layers $i$ and $j$ such that the edges between $S_i$ and $S_j$ is at least $\delta^2/4$ fraction of the edges between the layers.

\end{enumerate}
\end{theorem}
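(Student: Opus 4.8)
The plan is to check the four clauses in turn, each time reducing to the corresponding guarantee of the bipartite instance from the hypothesis together with the block-diagonal product structure of the layered construction, along the lines of the multilayered analysis of \cite{GuruswamiHHSV2014}; the ground field plays no role. \emph{Completeness} is immediate from the construction: given a satisfying tensor-square labeling $\{x_u\otimes x_u\},\{y_v\otimes y_v\}$ of the bipartite instance, assign to the layer-$i$ vertex $(u_1,\dots,u_i,v_{i+1},\dots,v_\ell)$ the block-diagonal matrix with diagonal blocks $x_{u_1}\otimes x_{u_1},\dots,x_{u_i}\otimes x_{u_i},y_{v_{i+1}}\otimes y_{v_{i+1}},\dots,y_{v_\ell}\otimes y_{v_\ell}$. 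The layered projection on an edge acts block by block --- as the identity where the two endpoints agree and as the bipartite projection on coordinates $k\in\{i+1,\dots,j\}$ --- so it is satisfied because each bipartite projection is; and the $V$-block conditions (the $(m,m)$-entry being $1$, satisfaction of $C_{v_k}$) descend coordinate by coordinate from bipartite completeness.

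\emph{Soundness} is the heart of the argument; I would prove it by contradiction. Suppose the forbidden configuration holds for some $0\le i<j<\ell$: symmetric rank-$\le k$ matrices $\{M_u\}_{u\in V_i},\{M_v\}_{v\in V_j}$ with the stated $V$-block properties and with the layered projection satisfied on a $\delta$-fraction of $V_i$--$V_j$ edges. Fix the coordinate $k^\star:=i+1$, which carries a $V$-vertex in layer $i$ and a $U$-vertex in layer $j$, and pass to the $k^\star$-th diagonal blocks $M_u^{(k^\star)},M_v^{(k^\star)}$; these are again symmetric of rank $\le k$, hence admissible bipartite matrix-labels. The one non-obvious point is to decode a bipartite labeling using \emph{shared} randomness that absorbs the coupling between the two endpoints of a layered edge. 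Sample once a context consisting of the coordinates layers $i$ and $j$ have in common (the $U$-coordinates $1,\dots,i$, filled by uniform $U$-vertices, and the $V$-coordinates $j+1,\dots,\ell$, filled by uniform $V$-vertices), together with one uniformly random bipartite edge on each of the coordinates $i+2,\dots,j$. For a bipartite $V$-vertex $w$, let $u$ be the layer-$i$ vertex carrying $w$ at coordinate $k^\star$ and, at the remaining coordinates, the context data (using the $V$-endpoints of the context edges), and set the decoded label of $w$ to $M_u^{(k^\star)}$; symmetrically, a bipartite $U$-vertex $w'$ gets $M_v^{(k^\star)}$ for the layer-$j$ vertex $v$ built from $w'$ and the $U$-endpoints of the context edges. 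Then for every bipartite edge $(w',w)$ the vertices $u,v$ so produced are joined by a layered $V_i$--$V_j$ edge, and since a uniform $V_i$--$V_j$ edge is exactly independent uniform choices of the shared coordinates, the middle-coordinate bipartite edges, and the coordinate-$k^\star$ bipartite edge, letting $(w',w)$ and the context be uniform makes $(u,v)$ \emph{exactly} uniform over $V_i$--$V_j$ edges. Since a satisfied layered projection on $(u,v)$ forces the relation $\pi_{w',w}(M_u^{(k^\star)})=M_v^{(k^\star)}$ between the $k^\star$-blocks, the expected fraction (over the context) of bipartite edges on which the decoded labeling satisfies its projection is at least $\delta$; fixing the context above the average yields a bipartite labeling by symmetric rank-$\le k$ matrices whose $V$-labels satisfy $C_w$ and which satisfies a $\delta$-fraction of the bipartite projections, contradicting the soundness of the bipartite instance (cf.\ \lref[Theorem]{thm:quad-label-cover}). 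I expect this shared-context decoding, and the verification that the induced layered edge is exactly uniform so that $\delta$ is preserved with no loss, to be the only step requiring real care; the remaining clauses are bi-regularity bookkeeping.

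\emph{Smoothness} reduces to bipartite smoothness: if $M\in\mathcal L_i$ is symmetric, non-zero, of rank $\le k$, pick a non-zero diagonal block $M^{(t)}$. If $t\le i$ the block lies on a coordinate no deeper-layer edge out of $u$ projects, so $\pi_e(M)\ne 0$ always; if $t>i$, then along a uniform deeper-layer edge out of $u$ the image $\pi_e(M)$ vanishes only if the edge reaches a layer $\ge t$ and the bipartite projection kills $M^{(t)}$ on coordinate $t$, and conditioned on the former the induced bipartite edge is a uniform edge incident on the $V$-vertex $u_t$, so the latter has probability at most $\delta/2$ by bipartite smoothness; averaging over the incident edge preserves the bound. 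Finally, \emph{weak density} is the classical label-free lemma: take the uniform distribution over $\ell$-tuples of bipartite edges $(f_1,\dots,f_\ell)\in E^{\ell}$, whose projection to each $V_{l_t}$ is uniform and whose projection to each pair $V_{l_s},V_{l_t}$ is a uniform edge between them (by bi-regularity). With $X_t$ the indicator that the layer-$l_t$ projection lands in $S_{l_t}$ and $\mu:=\E\sum_t X_t\ge\delta\ell'\ge 2$, we get $\sum_{s\ne t}\E[X_sX_t]=\E[(\sum_t X_t)^2]-\mu\ge\mu^2-\mu\ge\tfrac12(\delta\ell')^2$, so averaging over the at most $(\ell')^2$ ordered pairs produces two layers $l_s,l_t$ with $\E[X_sX_t]\ge\tfrac12\delta^2\ge\tfrac14\delta^2$, i.e.\ the edges between $S_{l_s}$ and $S_{l_t}$ form at least a $\delta^2/4$ fraction of the edges between those layers.
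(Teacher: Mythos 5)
The paper itself gives no proof of this theorem (the statement sits in a block disabled by the \texttt{ifcomment} switch, so it never appears in the compiled paper), so there is nothing in the source to compare against; the only question is whether your argument is sound. It is. All four clauses are handled by the standard multilayered machinery, and you are right that the ground field is irrelevant throughout.

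A few points worth flagging. First, the statement as written has two internal typos you implicitly corrected: the soundness clause writes $\pi_e(M_v)=M_u$ with $u\in V_i$, $v\in V_j$, $i<j$, but from the definition of the per-coordinate constraint $\pi_{u_k,v_k}$ (with $u_k\in V$, $v_k\in U$) the projection actually maps $\mathcal L_i\to\mathcal L_j$, so the constraint must read $\pi_e(M_u)=M_v$; and the smoothness clause mixes up $u\in V_i$ with $M_v\in\mathcal L_i$. Your shared-context decoding and your block-$t$ analysis use the corrected reading, which is the only one that type-checks, so this is fine, but in a write-up you should say so explicitly rather than silently repair the statement. Second, in the soundness decoding you should state once, but explicitly, that because each $M_u\in\mathcal L_i$ is block-diagonal by definition of $\mathcal L_i$, the block $M_u^{(k^\star)}$ is symmetric of rank $\le k$ and (since $k^\star>i$) is a $V$-block of the right dimension $m\times m$ that inherits the $(m,m)$-entry and $C_{w}$ conditions from the hypothesis; the reader otherwise has to reconstruct why the decoded labeling is admissible for the bipartite soundness. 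Third, in the smoothness clause the random edge must be restricted to edges from $u$ into deeper layers (otherwise $\pi_e$ is not even defined on $\mathcal L_i$); you handle only that case, correctly, but the statement itself is silent about it and you should make the restriction explicit. With those clarifications the proof is complete: completeness is blockwise, soundness follows from the exact-uniformity of the induced $V_i$--$V_j$ edge under the shared-context decoding (and an averaging step to fix a good context), smoothness follows by isolating one non-zero $V$-block and invoking bipartite smoothness conditioned on the target layer being deep enough, and weak density is the usual second-moment computation $\sum_{s\ne t}\E[X_sX_t]\ge \mu^2-\mu\ge\tfrac12(\delta\ell')^2$ with $\mu\ge\delta\ell'\ge 2$, averaged over the $<(\ell')^2$ ordered pairs.
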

\fi

\section{$2$-colorable $8$-uniform hypergraphs}
In this section we prove \lref[Theorem]{thm:2c8u}. Our reduction starts from the label cover instances given by \lref[Theorem]{thm:quad-label-cover}. Let $(U,V,E,\Pi)$ be an instance of the label cover. We will construct a hypergraph $\mathcal G=(\mathcal V, \mathcal E)$. For $v\in V$, let $\mathcal H_v \subseteq \F_2^{m\times m}$ be the dual of the subspace of the set matrices that are symmetric and which satisfies the constraint $C_v$. The set of vertices $\mathcal V$ will be the same as $V\times\left( \F_2^{m\times m}/\mathcal H_v\right)$. Any $2$-coloring of $\mathcal G$ is a collection of functions $A'_v:\F_2^{m\times m}/\mathcal H_v \rightarrow \{0,1\}$ for $v\in V$. For any such function, we can uniquely extend it to get $A_v:\F_2^{m\times m} \rightarrow \{0,1\}$ which is constant over cosets of $\mathcal H_v$. This method is called folding and it ensures that $A_v$ satisfies the following: if $\alpha \in \F_2^{m\times m}$ is such that $\widehat A_v(\alpha)$ is non-zero, then $\alpha$ is symmetric and satisfies $C_v$. 
\paragraph{}
The set of edges $\mathcal E$ will be defined by the test mentioned below, which checks whether a supposed $2$-coloring $A'_v:\F_2^{m\times m}/\mathcal H_v \rightarrow \{0,1\}$ is valid. There is an edge in $\mathcal E$ between any set of vertices in $\mathcal V$ that are queried together by the test. The test will be querying the extended functions $A_v$ at matrices in $\F_2^{m\times m}$ instead of $A'_v$. So a query to $A_v$ at $X \in \F_2^{m\times m}$ corresponds to a query to $A'_v$ at the coset of $\mathcal H_v$ that contains $X$.
\paragraph{$2$-Colorable $8$-Uniform Test $\mathcal T_{2,8}$}
\begin{enumerate}
\item Choose $u \in U$ uniformly at random and $v,w \in V$ uniformly and independently at random from the neighbors of $u$. Let $\pi,\sigma:\F_2^{m \times m} \rightarrow \F_2^{r \times r}$ be the projections corresponding to the edges $(u,v),(u,w)$ respectively. Uniformly and independently at random choose $X_1,X_2, Y_1,Y_2 \in \F_2^{m \times m}$ and $\overline x, \overline y, \overline z , \overline x', \overline y', \overline z' \in \F_2^m$ and $F \in \F_2^{r\times r}$. Let $\overline e_m \in \F_2^m$ be the vector with only the $m$\th entry $1$ and the rest is $0$.
\item Accept if and only if the following $8$ values are not all equal :
\begin{align*}
&A_v(X_1)&&A_v(X_3) &\text{ where } X_3 &:= X_1+\overline x \otimes \overline y +F\circ \pi\\
&A_v(X_2)& &A_v(X_4) &\text{ where } X_4 &:= X_2 + (\overline x + \overline e_m) \otimes \overline z  + F \circ \pi\\
&A_w(Y_1)&&A_w(Y_3) &\text{ where } Y_3 &:= Y_1+\overline x'\otimes \overline y'+F\circ \sigma + \overline e_m \otimes \overline e_m\\
&A_w(Y_2)&&A_w(Y_4) &\text{ where } Y_4 &:= Y_2 + (\overline x'+\overline e_m)\otimes \overline z'  +F \circ \sigma + \overline e_m \otimes \overline e_m
\end{align*}
\end{enumerate}

\subsection{YES Case}\label{sec:yes-28}
Let  $\overline y_v \otimes \overline y_v$ for $v\in V$ and $\overline x_u \otimes \overline x_u$ for $u\in U$ be a perfectly satisfying labeling of the label cover instance. That is, for every $(u,v)\in E, \pi_{u,v}(\overline y_v\otimes\overline y_v) = \overline x_u \otimes \overline x_u$. Such a labeling is guaranteed by the YES instance of label cover, with the additional property that the $m$\th coordinate of $\overline y_v$ is $1$. Consider the following $2$-coloring of $\mathcal G$: for each $v \in V$,  $A_v(X):= \langle X,\overline y_v \otimes \overline y_v \rangle$. Note that such a function is constant over cosets of $\mathcal H_v$.  Let 
\begin{align*}
&x_1 := \langle X_1 , \overline y_v \otimes \overline y_v \rangle& ~~~~~&x_2 := \langle X_2 , \overline y_v \otimes \overline y_v \rangle\\
&y_1 := \langle Y_1 , \overline y_w \otimes \overline y_w \rangle& ~~~~~&y_2 := \langle Y_2 , \overline y_w \otimes \overline y_w \rangle
\end{align*}
and $f := \langle F, \overline x_u \otimes \overline x_u \rangle$.  Note that $\langle F\circ \pi_{u,v}, \overline y_v\otimes \overline y_v \rangle = \langle F, \pi_{u,v}(y_v \otimes y_v)\rangle = \langle F, \overline x_u \otimes \overline x_u\rangle$ , and $\langle \overline e_m \otimes \overline e_m ,\overline y_v \otimes \overline y_v \rangle = \langle \overline e_m , \overline y_v \rangle = 1$. Using these, the assignments to the $8$ query locations are:
\begin{align*}
&x_1& ~~~~~&x_1+\langle \overline y_v, \overline x\rangle \langle \overline y_v, \overline y \rangle + f\\
&x_2& ~~~~~&x_2 + \left(\langle\overline y_v, \overline x\rangle +1\right) \langle \overline y_v, \overline z\rangle + f \\
&y_1& ~~~~~&y_1+\langle \overline y_w, \overline x'\rangle \langle\overline y_w, \overline y' \rangle  + f + 1\\
&y_2& ~~~~~&y_2 + \left(\langle\overline y_w, \overline x'\rangle +1\right) \langle\overline y_w, \overline z'\rangle + f + 1
\end{align*}
It is easy to see at least one of the $4$ rows are always not equal. Hence $A$ is a valid $2$-coloring of $\mathcal G$.
 
\subsection{NO Case}\label{sec:no-28}
Suppose the reduction was applied to a NO instance of label cover. Let $k$ and $\delta$ be the parameters specified by \lref[Theorem]{thm:quad-label-cover}.

\begin{lemma}\label{lem:soundness-28}
If there is an independent set in $\mathcal G$ of relative size $s$ then
$$ s^8 \leq \delta +\frac{1}{2^{k/2+1}}.$$
\end{lemma}
\begin{proof}
The proof of the lemma is similar to Section 8.2 in Khot and Saket \cite{KhotS2014b}. Consider any set $A \subseteq \mathcal V$ of fractional size $s$. For every $v\in V$, let $A_v:\F_2^{m\times m} \rightarrow \{0,1\}$ be the indicator function that is extended such that it is constant over cosets of $\mathcal H_v$.  $A$ is an independent set if and only if
\begin{equation}\label{eqn:indep}
\Theta :=  \E_{u,v,w} ~\E_{X_i,Y_i  \in \mathcal T_{2,8}}~\prod_{i=1}^4A_v(X_i)A_w(Y_i) = 0.
\end{equation}
Now we do the Fourier expansion and take expectations over $X_1,X_2,Y_1,Y_2$ to obtain the following:
\begin{align*}
 \Theta =  \E_{u,v,w} \sum_{\substack{\alpha_1, \alpha_2\\ \beta_1,\beta_2} \in \F_2^{m\times m}} &\E_{F,\overline x,\overline x'} \Bigg[ \widehat A_v(\alpha_1)^2 \E_{\overline y}\left[\chi_{\alpha_1}(\overline x \otimes \overline y)\right]\chi_{\alpha_1}(F\circ \pi)\\  
 & \widehat A_v(\alpha_2)^2 \E_{\overline z}\left[\chi_{\alpha_2}((\overline x +\overline e_m) \otimes \overline z)\right]\chi_{\alpha_2}(F\circ \pi)\\
 &\widehat A_w(\beta_1)^2 \E_{\overline y'}\left[\chi_{\beta_1}(\overline x'  \otimes \overline y' )\right]\chi_{\beta_1}(F\circ \sigma) \chi_{\beta_1}(\overline e_m \otimes \overline e_m)\\  
 & \underbrace{\widehat A_w(\beta_2)^2 \E_{\overline z'}\left[\chi_{\beta_2}((\overline x'+\overline e_m)  \otimes \overline z')\right]\chi_{\beta_2}(F\circ \sigma) \chi_{\beta_2}(\overline e_m \otimes \overline e_m) \Bigg]}_{=:\Term_{u,v,w}(\alpha_1,\alpha_2,\beta_1,\beta_2)}
\end{align*}

Note that since $F\in \F_2^{r\times r}$ is chosen uniformly at random,
$$ \E_F \chi_{\alpha_1}(F\circ \pi) \chi_{\alpha_2}(F\circ \pi) \chi_{\beta_1}(F \circ \sigma) \chi_{\beta_2}(F\circ \sigma) =\E_F (-1)^{\langle \pi(\alpha_1 +\alpha_2), F \rangle + \langle \sigma(\beta_1 +\beta_2), F \rangle}$$
is zero unless $\pi(\alpha_1 +\alpha_2) = \sigma(\beta_1 +\beta_2)$. Let $\nu(\alpha):= (-1)^{\langle \alpha, \overline e_m \otimes \overline e_m \rangle}$. Now taking expectations over $\overline x,\overline y, \overline z, \overline x',\overline y',\overline z'$,  and noting that $\langle\alpha, x \otimes y \rangle = \langle \alpha x, y \rangle$, we obtain 
\begin{equation}\label{eqn:sim-term}
\begin{aligned}
  \Term_{u,v,w}(\alpha_1, \alpha_2, \beta_1,\beta_2) = (-1)^{\nu(\beta_1+\beta_2)}& \widehat A_v(\alpha_1)^2 \widehat A_v(\alpha_2)^2 \widehat A_w(\beta_1)^2 \widehat A_w(\beta_2)^2\\
& \Pr_{\overline x} \left[ \alpha_1 \overline x = 0 \wedge \alpha_2 \overline x = \alpha_2 e_m  \right] \cdot\\
& \Pr_{\overline x'} \left[ \beta_1 \overline x' = 0 \wedge \beta_2 \overline x' = \beta_2 e_m  \right]
\end{aligned}
\end{equation}

when $\pi(\alpha_1 +\alpha_2) = \sigma(\beta_1 +\beta_2)$ and $0$ otherwise.
Define:
\begin{align}
\Theta_0 &= \E_{u,v,w} \sum_{\substack{\rank(\alpha_1+\alpha_2), \rank(\beta_1+\beta_2) \leq k\\  \pi(\alpha_1+\alpha_2)= \sigma(\beta_1+\beta_2) \\ \nu(\beta_1+\beta_2) = 0 } }  \Term_{u,v,w}(\alpha_1, \alpha_2, \beta_1,\beta_2)\\
\Theta_1 &= \E_{u,v,w}\sum_{\substack{\rank(\alpha_1+\alpha_2), \rank(\beta_1+\beta_2) \leq k\\ \pi(\alpha_1+\alpha_2)= \sigma(\beta_1+\beta_2) \\  \nu(\beta_1+\beta_2) = 1 } }  \Term_{u,v,w}(\alpha_1, \alpha_2, \beta_1,\beta_2)\\
\Theta_2 &= \E_{u,v,w}\sum_{\substack {\max\left\{  \rank(\alpha_1+\alpha_2) ,\rank(\beta_1+\beta_2)\right\} > k\\ \pi(\alpha_1+\alpha_2)= \sigma(\beta_1+\beta_2)} }  \Term_{u,v,w}(\alpha_1, \alpha_2, \beta_1,\beta_2)
\end{align}
We lower bound $\Theta_0$ by $s^8$,  upper bound $|\Theta_1|$ by $\delta$ and $|\Theta_2|$ by $1/2^{k/2+1}$ below. Along with \eqref{eqn:indep}, this will prove \lref[Lemma]{lem:soundness-28}.

\subsubsection{Lower bound on \texorpdfstring{$\Theta_0$}{Theta0}}
Note that all terms in $\Theta_0$ are positive.
Now consider the term corresponding to $\alpha_1=\alpha_2=\beta_1=\beta_2=0$.
\begin{equation}
\E_{u,v,w}\widehat A^4_v(0) \widehat  A^4_w(0) = \E_u \left(\E_v \widehat  A^4_v(0) \right)^2 \geq \left(\E_{uv} \widehat  A_v(0)\right )^8 \geq s^8.
\end{equation}

\subsubsection{Upper bound on \texorpdfstring{$|\Theta_1|$}{Theta1}}
We can upper bound $|\Theta_1|$ by
\begin{equation}\label{eqn:fourier-decod}
\E_{u,v,w} \sum_{\substack{ \rank(\alpha_1+\alpha_2),\rank(\beta_1+\beta_2) \leq k,\\ \pi(\alpha_1+\alpha_2) = \sigma(\beta_1+\beta_2),\\ \nu(\beta_1+\beta_2)=1 }} \widehat A^2_v(\alpha_1) \widehat A^2_v(\alpha_2) \widehat A_w^2(\beta_1) \widehat A_w^2(\beta_2).
\end{equation}
Consider the following strategy for labeling vertices $u\in U$ and $v\in V$. For $u\in U$, pick a random neighbor $v$, choose $(\alpha_1,\alpha_2)$ with probability $\widehat A^2_v(\alpha_1) \widehat A^2_v(\alpha_2)$ and set its label to $\pi(\alpha_1+\alpha_2)$. For $w\in V$, choose $(\beta_1,\beta_2)$ with probability $\widehat A_w^2(\beta_1) \widehat A_w^2(\beta_2)$ and set its label to $\beta_1+\beta_2$. Since $A_w$ is folded, both $\beta_1$ and $\beta_2$ are symmetric and satisfies $C_v$. Since these constraints are homogeneous, $\beta_1+\beta_2$ is also symmetric and satisfies $C_v$. Also $\pi$ maps symmetric matrices to symmetric matrices. Note that \eqref{eqn:fourier-decod} gives the probability that a random edge $(u,w)$ of the label cover is satisfied by this labeling. Hence \eqref{eqn:fourier-decod} and $|\Theta_1|$ are upper bounded by $\delta$.

\subsubsection{Upper bound on \texorpdfstring{$|\Theta_2|$}{Theta2}}
Note that if the $\rank(\alpha) >k$, for any fixed $b$, $\Pr_{\overline x}[\alpha x = b] \leq 1/2^{k+1}$. All terms in $\Theta_2$ has $\max \{ \rank(\alpha_1), \rank(\alpha_2),\rank(\beta_1), \rank(\beta_2) \} > k/2.$ 
From \eqref{eqn:sim-term} we have that, for any fixed choice of $u,v,w$  each term in $\Theta_2$ has absolute value at most $1/2^{k/2+1}$. Since $A,B$ are $\{0,1\}$ valued functions, sum of their squared coefficients is upper bounded by $1$ (i.e. Parseval's inequality). Thus $|\Theta_2| \leq 1/2^{k/2+1}$.

\end{proof}

\begin{proof}[Proof of Theorem \ref{thm:2c8u}]
We already saw in \lref[Section]{sec:yes-28} that an YES instance of label cover is mapped to a $2$-colorable hypergraph.
Since $k= (\log N)^{1/8 - 2\epsilon}$ and $\delta = 2^{-(\log N)^{1/4 -2\epsilon}}$, $s \leq2^{-(\log N)^{1/8 -3\epsilon}}.$ Also the number of vertices in $\mathcal G$, 
$$n \leq N 2^{m^2} \leq N \cdot 2^{(\log N)^{10/4 +2\epsilon}}.$$
 From \lref[Lemma]{lem:soundness-28} and above,  a NO instance of label cover is mapped to a hypergraph $\mathcal G$ that has no independent set of relative size $2^{-(\log n)^{1/20-4\epsilon}}$.
\end{proof}




\section{$4$-colorable $4$-uniform hypergraphs}
In this section, we modify the reduction in the previous section, so that the uniformity of the hypergraph produced is decreased to $4$ at the cost of increasing the number of colors required in the YES case to $4$. This method was proposed by Guruswami \etal \cite{GuruswamiHHSV2014}. The hypergraph $\mathcal G=(\mathcal V, \mathcal E)$ constructed will have vertices $$\mathcal V= V\times\left( \F_2^{m\times m}\times  \F_2^{m\times m}/\mathcal H_v \times \mathcal H_v\right).$$  Any $4$-coloring of $\mathcal G$ can be expressed as a collection of functions
 $$A'_v:\left( \F_2^{m\times m}\times  \F_2^{m\times m}/\mathcal H_v \times \mathcal H_v\right) \rightarrow \{0,1\}^2,  \text{ for } v \in V.$$ 
We can uniquely extend such functions to get $A_v:\F_2^{m\times m} \times \F_2^{m\times m} \rightarrow \{0,1\}^2$ which is constant over cosets of $\mathcal H_v \times \mathcal H_v$. This ensures that $A$ satisfies the following: if $\alpha = (\alpha_1,\alpha_2) \in \F_2^{m\times m} \times \F_2^{m\times m}$ is such that $\widehat A(\alpha)$ is non-zero, then $\alpha_1,\alpha_2$ are both symmetric and satisfies $C_v$. The set of edges $\mathcal E$ will be defined by the test mentioned below.

\paragraph{$4$-Colorable $4$-Uniform Test}
\begin{enumerate}
\item  Sample $v,w$ and $\{ X_i ,Y_i \}_{i=1}^4$ from the distribution $\mathcal T_{2,8}$ as described by the test in the previous section.

\item Accept if and only if the following $4$ values are not all equal :
\begin{align*}
A_v(X_1,X_2)~~~A_v(X_3, X_4)~~~A_w(Y_1,Y_2)~~~A_w(Y_3, Y_4)
\end{align*}

\end{enumerate}
\subsection{YES Case}\label{sec:yes-44}
Given a perfectly satisfying labeling $\overline y_v \otimes \overline y_v$ for $v\in V$ and $\overline x_u \otimes \overline x_u$ for $u\in U$, we define the following $4$-coloring for $\mathcal G$:  for each $v \in V$,  $$A_v(X_1,X_2):= \left(\langle X_1,\overline y_v \otimes \overline y_v \rangle,\langle X_2,\overline y_v \otimes \overline y_v \rangle \right).$$ Note that such a function is constant over cosets of $\mathcal H_v$.  Using the arguments from \lref[Section]{sec:yes-28}, it is easy to see that $A$ is a valid $4$-coloring of $\mathcal G$.
\subsection{NO Case}\label{sec:no-44}
The analysis of the NO case is similar to \lref[Section]{sec:no-28}.

\ifcomment

\section{$2$-colorable $4$-uniform Hypergraphs}
\girish{There are gaps in the analysis that we hope to resolve soon. Any suggestions are welcome. :-)}
In this section, we propose a reduction to $4$-uniform hypergraphs, where the YES case is $2$-colorable. As described in the previous sections, the reduction is defined by a test on functions $A_v:\F_2^{m\times m} \rightarrow \{0,1\}$ that is folded over cosets of $\mathcal H_v$  for each $v$ in the ``big" side of the label cover instance. 
\paragraph{$2$-Colorable $4$-Uniform  Test $\mathcal T_{2,4}$}
\begin{enumerate}
\item Choose $u \in U$ uniformly at random and $v,w \in V$ uniformly and independently at random from the neighbors of $u$. Let $\pi_1,\pi_2:\F_2^{m \times m} \rightarrow \F_2^{r \times r}$ be the projections corresponding to the edges $(u,v),(u,w)$ respectively and $\sigma_1,\sigma_2:\F_2^m \rightarrow \F_2^r$ be the linear operators obtained from Theorem 7.1 in \cite{KhotS2014b} such that $\pi_i(\alpha):= \rho_i \alpha \rho_i^T$. Uniformly and independently at random choose $X_1, Y_1 \in \F_2^{m \times m}$ and $\overline x \in \F_2^r, \overline y, \overline z \in \F_2^m$. Let $\overline e_m \in \F_2^m$ be the vector with only the $m$\th entry $1$ and the rest is $0$. 
\item Accept if and only if the following are not all equal:
\begin{align*}
&A_v(X_1)&&A_v(X_2)&\text{ where } X_2 &:= X_1+ (\sigma_1^T \overline x) \otimes \overline y + \overline e_m \otimes \overline e_m\\
&A_w(Y_1)&&A_w(Y_2)&\text{ where } Y_2 &:= Y_1+( \sigma_2^T \overline x+ \overline e_m) \otimes \overline z  + \overline e_m \otimes \overline e_m
\end{align*}
\end{enumerate}

\subsection{YES case}
Let  $\overline y_v \otimes \overline y_v$ for $v\in V$ and $\overline x_u \otimes \overline x_u$ for $u\in U$ be a perfectly satisfying labeling of the label cover instance. Also the $m$\th coordinate of $\overline y_v$ is $1$. Consider the following $2$-coloring of $\mathcal G$: for each $v \in V$,  $A_v(X):= \langle X,\overline y_v \otimes \overline y_v \rangle$. Note that such a function is constant over cosets of $\mathcal H_v$.  Let 
\begin{align*}
&x_1 := \langle X_1 , \overline y_v \otimes \overline y_v \rangle& ~~~~~&y_1 := \langle Y_1 , \overline y_w \otimes \overline y_w \rangle.
\end{align*}
The assignments to the $4$ query locations are:
\begin{align*}
&x_1& ~~~~~&x_1+\langle \overline x_u, \overline x\rangle \langle \overline y_v, \overline y \rangle + 1\\
&y_1& ~~~~~&y_1+ (\langle \overline x_u, \overline x\rangle +1) \langle\overline y_w, \overline z \rangle  +  1
\end{align*}
It is easy to see at least one of the $2$ rows are always not equal. Hence $A$ is a valid $2$-coloring of $\mathcal G$.

\subsection{NO Case}

Suppose the reduction was applied to a NO instance of label cover. Let $k$ and $\delta$ be the parameters specified by Theorem 7.2 in \cite{KhotS2014b}.

\begin{lemma}\label{lem:soundness-24}\girish{This lemma is not proved yet!}
If there is an independent set in $\mathcal G$ of relative size $s$ then
$$ \poly(s) \leq O(\delta) +2^{-\Omega(k)}.$$
\end{lemma}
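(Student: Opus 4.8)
The plan is to run the Fourier-analytic argument of \lref[Lemma]{lem:soundness-28} with the cheaper $4$-query test $\mathcal T_{2,4}$. Identify a candidate independent set $A\subseteq\mathcal V$ of relative size $s$ with the folded functions $A_v\colon\F^{m\times m}\to\{0,1\}$; then $A$ is independent iff $\Theta:=\E_{u,v,w}\,\E_{X_1,Y_1,\overline x,\overline y,\overline z}\,A_v(X_1)A_v(X_2)A_w(Y_1)A_w(Y_2)=0$. I would Fourier-expand the four factors, take expectations over $X_1$ and $Y_1$ (each forces the two characters on a side to coincide, leaving one character $\alpha$ on $v$ and one $\beta$ on $w$), and then over $\overline y$ and $\overline z$ (producing the indicators $\mathbf 1[\alpha\sigma_1^T\overline x=0]$ and $\mathbf 1[\beta\sigma_2^T\overline x=\beta\overline e_m]$). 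This leaves
\[
\Theta=\E_{u,v,w}\sum_{\alpha,\beta}\widehat A_v(\alpha)^2\,\widehat A_w(\beta)^2\,(-1)^{\langle\alpha+\beta,\ \overline e_m\otimes\overline e_m\rangle}\,\Pr_{\overline x}\!\left[\alpha\sigma_1^T\overline x=0\ \wedge\ \beta\sigma_2^T\overline x=\beta\overline e_m\right],
\]
where, by folding, every contributing $\alpha$ is symmetric and satisfies $C_v$ and every contributing $\beta$ is symmetric and satisfies $C_w$, and $\langle\alpha,\overline e_m\otimes\overline e_m\rangle=\alpha_{mm}$.

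Next I would split $\Theta=\Theta^+-\Theta^-$ according to the parity of $\alpha_{mm}+\beta_{mm}$; every summand of $\Theta^\pm$ is nonnegative, so independence of $A$ forces $\Theta^+=\Theta^-$. The term $\alpha=\beta=0$ lies in $\Theta^+$, and exactly as in \lref[Section]{sec:no-28}, $\E_{u,v,w}\widehat A_v(0)^2\widehat A_w(0)^2=\E_u\bigl(\E_v\widehat A_v(0)^2\bigr)^2\ge(\E_{uv}\widehat A_v(0))^4\ge s^4$; hence $\Theta^+\ge s^4$ and it would suffice to prove $\Theta^-\le O(\delta)+2^{-\Omega(k)}$, giving $s^4\le O(\delta)+2^{-\Omega(k)}$.

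For $\Theta^-$ the intended route is the three-way split of \lref[Lemma]{lem:soundness-28}. (i) When $\max\{\rank(\alpha),\rank(\beta)\}>k/2$ one hopes the factor $\Pr_{\overline x}[\cdot]$ is at most $2^{-\Omega(k)}$; since $\sum_{\alpha,\beta}\widehat A_v(\alpha)^2\widehat A_w(\beta)^2\le 1$ by Parseval, this part is then bounded by $2^{-\Omega(k)}$. (ii) Among the low-rank terms, those with $\pi_1(\alpha)=0$ or $\pi_2(\beta)=0$ should be controlled by the smoothness clause of \lref[Theorem]{thm:quad-label-cover}: for a random edge incident to $v$, a fixed nonzero low-rank $\alpha$ has $\pi_1(\alpha)=0$ with probability $\le\delta/2$, contributing $O(\delta)$. (iii) The low-rank terms with both $\pi_1(\alpha)\ne0$ and $\pi_2(\beta)\ne0$ should be bounded by $O(\delta)$ by reading off a randomized low-rank labeling of the outer instance from $\{\widehat A_v(\alpha)^2\}$ and $\{\widehat A_w(\beta)^2\}$ and invoking the soundness clause.

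The hard part — and the reason this lemma is only conjectured — is that steps (i) and (iii) both break on the same feature of the cheaper test: the perturbation $(\sigma_1^T\overline x)\otimes\overline y$ only moves, in its first tensor factor, inside the $\le r$-dimensional image of $\sigma_1^T$. Hence $\Pr_{\overline x}[\alpha\sigma_1^T\overline x=0]=2^{-\rank(\alpha\sigma_1^T)}$ and $\rank(\alpha\sigma_1^T)=\rank(\sigma_1\alpha)\ge\rank(\pi_1(\alpha))$ may be as small as $1$ even for $\rank(\alpha)$ arbitrarily large, so a high-rank $\alpha$ whose column space meets the image of $\sigma_1^T$ in one dimension is damped only by $\tfrac12$; since $\sum_{\alpha\ne0}\widehat A_v(\alpha)^2$ can be near $1$, step (i) does not drive $\Theta^-$ to $0$. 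Moreover, unlike in $\mathcal T_{2,8}$ (where averaging over $F$ forced $\pi(\alpha_1+\alpha_2)=\sigma(\beta_1+\beta_2)$), the displayed expression carries no factor linking $\pi_1(\alpha)$ to $\pi_2(\beta)$, so step (iii) has no genuine label-cover violation to invoke. Closing the gap appears to require strengthening \lref[Theorem]{thm:quad-label-cover} from smoothness to a rank-robust form — say, that for a random edge $e$ at $v$ and any symmetric $M_v$ with $\tfrac k2<\rank(M_v)\le k$ one has $\rank(\pi_e(M_v))\ge\Omega(k)$ with probability $\ge1-\delta$ — and reworking step (iii) accordingly; whether the Khot--Saket construction actually affords such a property is the crux.
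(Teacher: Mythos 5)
This lemma is explicitly flagged in the source as ``not proved yet,'' and the paper's own proof sketch is itself incomplete: it relies on three unproved conjectures. Your proposal neither completes the proof (you are candid about this) nor follows the paper's route, so it is worth contrasting the two. You decompose $\Theta$ directly as $\Theta^+-\Theta^-$ by the sign $(-1)^{\alpha_{mm}+\beta_{mm}}$, use $\alpha=\beta=0$ to get $\Theta^+\ge s^4$, and then attempt the familiar low-rank/high-rank/smoothness trichotomy on $\Theta^-$. The paper instead introduces a decoupled single-vertex quantity $\Theta_v:=\E_{X,\overline x,\overline y}A_v(X)A_v\bigl(X+(\sigma^T\overline x)\otimes\overline y+\overline e_m\otimes\overline e_m\bigr)$, conjectures $\Theta_v\ge\poly(s_v)$, sets $\Theta':=\E_{u,v,w}\Theta_v\Theta_w$, and analyzes $\Theta'-\Theta=\Theta'$ (since $\Theta=0$) via a three-part rank split keyed to $\rank(\alpha)$ \emph{and} $\rank(\rho_1\alpha)$ simultaneously, with the two remaining pieces again left as conjectures.

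The crux you isolate is exactly the one the paper's conjectures are designed to paper over, and you state it more sharply than the source does. The cheap test's perturbation lives in the image of $\sigma_1^T$, so the Fourier weight at $\alpha$ is damped only by $2^{-\rank(\rho_1\alpha)}$, not $2^{-\rank(\alpha)}$; this is the paper's equation \eqref{eqn:large-rank-24}, and the dangerous regime $\rank(\alpha)>k$ but $\rank(\rho_1\alpha)$ small is precisely the paper's $\Theta_1$, which it merely \emph{conjectures} is small by analogy to \cite{Saket2013}. Likewise, you correctly observe that, absent the $F$ matrix of $\mathcal T_{2,8}$, the surviving terms carry no hard constraint $\pi_1(\alpha)=\pi_2(\beta)$, only the soft coupling through the shared $\overline x$, so the Fourier-decoding argument of \lref[Section]{sec:no-28} does not port over; this is why the paper has to conjecture something for its $\Theta_0$ as well. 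Your proposed fix --- a rank-robust smoothness clause, ensuring $\rank(\pi_e(M_v))$ stays large for a random edge --- is essentially what the commented-out ``strong smoothness'' lemma later in the source tries (and fails) to prove, via a union bound costing a $2^k$ factor. So: different decomposition, same diagnosis, and you are right that the lemma as stated has no complete proof in the paper. The one thing your write-up could make tighter is the claim ``step (i) does not drive $\Theta^-$ to $0$'': it is not just that $\sum_{\alpha\ne0}\widehat A_v(\alpha)^2$ can be near $1$, but that for an adversarial $A_v$ the mass can concentrate on high-rank $\alpha$ with $\rank(\rho_1\alpha)=O(1)$ for a constant fraction of neighbors $u$, which is the concrete failure mode the would-be rank-robust smoothness must exclude.
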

\begin{proof}
 Consider any set $A \subseteq \mathcal V$ of fractional size $s$. For every $v\in V$, let $A_v:\F_2^{m\times m} \rightarrow \{0,1\}$ be the indicator function that is extended such that it is constant over cosets of $\mathcal H_v$.  $A$ is an independent set if and only if
\begin{equation}\label{eqn:indep-24}
\Theta :=  \E_{u,v,w} ~\E_{X_i,Y_i  \in \mathcal T_{2,4}} A_v(X_1)A_v(X_2)A_w(Y_1)A_w(Y_2) = 0.
\end{equation}
Now we do the Fourier expansion and take expectations over $X_1,Y_1$ to obtain the following:
\begin{align*}
 \Theta =  \E_{u,v,w} \sum_{\alpha, \beta \in \F_2^{m\times m}} &\E_{\overline x} \Bigg[ \widehat A_v(\alpha)^2 \E_{\overline y}\left[\chi_{\alpha}(\rho^T_1\overline x \otimes \overline y)\right]\chi_{\alpha}(\overline e_m\otimes \overline e_m)\\  
 &\underbrace{\widehat A_w(\beta)^2 \E_{\overline z}\left[\chi_{\beta}((\rho^T_2\overline x +\overline e_m)   \otimes \overline z )\right] \chi_{\beta}(\overline e_m \otimes \overline e_m) \Bigg]}_{=:\Term_{u,v,w}(\alpha,\beta)}
\end{align*}

Taking expectations over $\overline x, \overline y, \overline z$, we get
\begin{equation}\label{eqn:sim-term}
\begin{aligned}
  \Term_{u,v,w}(\alpha,  \beta) = (-1)^{\nu(\alpha+\beta)} \widehat A_v(\alpha)^2  \widehat A_w(\beta)^2 \Pr_{\overline x} \left[ (\rho_1 \alpha)^T \overline x = 0 \wedge  (\rho_2\beta)^T \overline x = \beta^T \overline e_m  \right]
\end{aligned}
\end{equation}
Notice that terms with $\nu(\alpha +\beta)=0$ are all positive and 
\begin{equation}\label{eqn:large-rank-24}
\max \{ \rank(\rho_1 \alpha) ,  \rank(\rho_2 \beta) \} > k  \Rightarrow|\Term_{u,v,w}(\alpha,\beta)| \leq 1/2^{k+1}.
\end{equation}

Now consider the following expectation
\begin{align*}
\Theta_v &:= \E_{X,\overline x, \overline y} A_v(X) A_v\left(X + (\sigma^T \overline x) \otimes \overline y + \overline e_m \otimes \overline e_m \right)\\
&= \sum_\alpha  (-1)^{\nu(\alpha)}\widehat A_v(\alpha)^2 \E_{\overline x, \overline y}\chi_\alpha \left(\sigma^T \overline x \otimes \overline y\right)\\
&=\sum_\alpha  \underbrace{(-1)^{\nu(\alpha)}\widehat A_v(\alpha)^2 \Pr_{\overline x}\left[(\sigma \alpha)^T \overline x = 0 \right]}_{=: \Term_v(\alpha)}
\end{align*}
\girish{We conjecture the following which is similar to Lemma 3.2 in \cite{Saket2013}.}
\begin{conjecture}
If $\E A_v = s_v$ then
$$\Theta_v \geq \poly(s_v)$$
\end{conjecture}
If the conjecture is true, we have that 
\begin{equation}\label{eqn:uncorrelated}
\Theta':= \E_{u,v,w} \Theta_v \Theta_w \geq \poly(s).
\end{equation}
Now consider, 
\begin{align*}
\Theta' - \Theta = \E_{u,v,w} \sum_{\alpha,\beta} \underbrace{\left(  \Term_v(\alpha)\Term_w(\beta)  - \Term_{u,v,w}(\alpha,\beta)\right)}_{:=\Term'_{u,v,w}(\alpha,\beta)} \geq \poly(s)
\end{align*}
since $\Theta = 0$ from \eqref{eqn:indep-24}. Now define:
\begin{align}
\Theta_0 &= \E_{u,v,w}\sum_{\substack{(\rank(\alpha) \leq k \wedge \rank(\rho_1\alpha) \leq \sqrt k) \wedge \\  (\rank(\beta) \leq k \wedge \rank(\rho_2\beta) \leq \sqrt k)} }  \Term'_{u,v,w}(\alpha, \beta)\\
\Theta_1 &= \E_{u,v,w}\sum_{\substack{(\rank(\alpha) \geq k \wedge \rank(\rho_1\alpha) \leq\sqrt k) \vee \\  (\rank(\beta) \geq k \wedge \rank(\rho_2\beta) \leq \sqrt k)} }  \Term'_{u,v,w}(\alpha, \beta)\\
\Theta_2 &= \E_{u,v,w}\sum_{\substack {\max\left\{  \rank(\rho\alpha) ,\rank(\rho\beta)\right\} > \sqrt k} }  \Term'_{u,v,w}(\alpha, \beta)
\end{align}
From \eqref{eqn:large-rank-24}, $|\Theta_2| \leq 2^{-\sqrt k }$.

\begin{conjecture}
$|\Theta_1|$ is small similar to property (b) in Theorem 2.4 in \cite{Saket2013}.
\end{conjecture}

\begin{conjecture}
If $|\Theta_0| \geq \poly(s)$  then there is a labeling to the label cover instance that satisfies $\poly(s)$ fraction of its constraints.
\end{conjecture}
\end{proof}
\fi

\ifcomment
\section{$3$-colorable $3$-uniform Hypergraphs}

In this section, we give a reduction to $3$-uniform hypergraphs, where the YES case is $3$-colorable. We will be using the multi-layered version of the label cover over $\F_3$ for this reduction. As before,
we will construct a hypergraph $\mathcal G=(\mathcal V, \mathcal E)$. For $v = (u_1,
\cdots,u_i,v_{i+1},\cdots ,v_\ell) \in V_i$, let $\mathcal H_v \subseteq \F_3^{m_i\times m_i}$ be the dual of the subspace of the set matrices in $\mathcal L_i$ that are symmetric and diagonal blocks corresponding to $v_{i+1},\cdots, v_\ell$ satisfies the constraints $C_{v_i}$'s. The reduction is given by a test on functions $A_v: \F_3^{m_i \times m_i} \rightarrow \{0,1\}$ folded over $\mathcal H_v$ for $v\in V_i, i \in \{0,\cdots, \ell -1\}$.  
\paragraph{$3$-Colorable $3$-Uniform  Test $\mathcal T_{3,3}$}
\begin{enumerate}
\item Choose two layers $0\leq i < j < \ell$ uniformly at random and then choose $(u,v) \in E_{ij}$. Let $m:= m_i, r:= m_j$ and $\pi:\F_3^{m \times m} \rightarrow \F_3^{r \times r}$ be the projection corresponding to the edges $(u,v)$. Uniformly and independently at random choose $X \in \F_3^{r\times r}, Y \in \F_3^{m \times m}$ and $\overline y \in \F_3^m$. Let $\overline e_m \in \F_3^m$ be the vector with only the $m$\th entry $1$ and the rest is $0$. 
\item Accept if and only if the following are not all equal:
\begin{align*}
&A_u(X)&&B_v(Y)&&B_v(Y') &\text{ where } Y' &:= \overline y \otimes \overline y + \overline e_m \otimes \overline e_m - Y - X\circ \pi
\end{align*}
\end{enumerate}

\subsection{YES Case}\label{sec:yes-33}
Let  $\overline x_v \otimes \overline x_v$ for $v\in V_i, i \in \{0,\cdots,\ell-1\}$ perfectly satisfying labeling of the label cover instance. Consider the following $3$-coloring of $\mathcal G$: for each $v \in V$,  $A_v(X):= \langle X,\overline y_v \otimes \overline y_v \rangle$. Note that such a function is constant over cosets of $\mathcal H_v$.  Let 
\begin{align*}
&x_1 := \langle X_1 , \overline y_v \otimes \overline y_v \rangle& ~~~~~&x_2 := \langle X_2 , \overline y_v \otimes \overline y_v \rangle\\
&y_1 := \langle Y_1 , \overline y_w \otimes \overline y_w \rangle& ~~~~~&y_2 := \langle Y_2 , \overline y_w \otimes \overline y_w \rangle
\end{align*}
and $f := \langle F, \overline x_u \otimes \overline x_u \rangle$.  Note that $\langle F\circ \pi_{u,v}, \overline y_v\otimes \overline y_v \rangle = \langle F, \pi_{u,v}(y_v \otimes y_v)\rangle = \langle F, x_u \otimes x_u\rangle$ , and $\langle \overline e_m \otimes \overline e_m ,\overline y_v \otimes \overline y_v \rangle = \langle \overline e_m , \overline y_v \rangle = 1$. Using these, the assignments to the $8$ query locations are:
\begin{align*}
&x_1& ~~~~~&x_1+\langle \overline y_v, \overline x\rangle \langle \overline y_v, \overline y \rangle + f\\
&x_2& ~~~~~&x_2 + \left(\langle\overline y_v, \overline x\rangle +1\right) \langle \overline y_v, \overline z\rangle + f \\
&y_1& ~~~~~&y_1+\langle \overline y_w, \overline x'\rangle \langle\overline y_w, \overline y' \rangle  + f + 1\\
&y_2& ~~~~~&y_2 + \left(\langle\overline y_w, \overline x'\rangle +1\right) \langle\overline y_w, \overline z'\rangle + f + 1
\end{align*}
It is easy to see at least one of the $4$ rows are always not equal. Hence $A$ is a valid $2$-coloring of $\mathcal G$.
 
\subsection{NO Case}\label{sec:no-28}
Suppose the reduction was applied to a NO instance of label cover. Let $k$ and $\delta$ be the parameters specified by Theorem 7.2 in \cite{KhotS2014b}.

\begin{lemma}\label{lem:soundness-28}
If there is an independent set in $\mathcal G$ of relative size $s$ then
$$ s^8 \leq \delta +\frac{1}{2^{k/2+1}}.$$
\end{lemma}
\begin{proof}
The proof of the lemma is similar to Section 8.2 in \cite{KhotS2014b}. Consider any set $A \subseteq \mathcal V$ of fractional size $s$. For every $v\in V$, let $A_v:\F_2^{m\times m} \rightarrow \{0,1\}$ be the indicator function that is extended such that it is constant over cosets of $\mathcal H_v$.  $A$ is an independent set if and only if
\begin{equation}\label{eqn:indep}
\Theta :=  \E_{u,v,w} ~\E_{X_i,Y_i  \in \mathcal T_{2,8}}~\prod_{i=1}^4A_v(X_i)A_w(Y_i) = 0.
\end{equation}
Now we do the Fourier expansion and take expectations over $X_1,X_2,Y_1,Y_2$ to obtain the following:
\begin{align*}
 \Theta =  \E_{u,v,w} \sum_{\substack{\alpha_1, \alpha_2\\ \beta_1,\beta_2} \in \F_2^{m\times m}} &\E_{F,\overline x,\overline x'} \Bigg[ \widehat A_v(\alpha_1)^2 \E_{\overline y}\left[\chi_{\alpha_1}(\overline x \otimes \overline y)\right]\chi_{\alpha_1}(F\circ \pi)\\  
 & \widehat A_v(\alpha_2)^2 \E_{\overline z}\left[\chi_{\alpha_2}((\overline x +\overline e_m) \otimes \overline z)\right]\chi_{\alpha_2}(F\circ \pi)\\
 &\widehat A_w(\beta_1)^2 \E_{\overline y'}\left[\chi_{\beta_1}(\overline x'  \otimes \overline y' )\right]\chi_{\beta_1}(F\circ \sigma) \chi_{\beta_1}(\overline e_m \otimes \overline e_m)\\  
 & \underbrace{\widehat A_w(\beta_2)^2 \E_{\overline z'}\left[\chi_{\beta_2}((\overline x'+\overline e_m)  \otimes \overline z')\right]\chi_{\beta_2}(F\circ \sigma) \chi_{\beta_2}(\overline e_m \otimes \overline e_m) \Bigg]}_{=:\Term_{u,v,w}(\alpha_1,\alpha_2,\beta_1,\beta_2)}
\end{align*}

Note that since $F\in \F_2^{r\times r}$ is chosen uniformly at random,
$$ \E_F \chi_{\alpha_1}(F\circ \pi) \chi_{\alpha_2}(F\circ \pi) \chi_{\beta_1}(F \circ \sigma) \chi_{\beta_2}(F\circ \sigma) =\E_F (-1)^{\langle \pi(\alpha_1 +\alpha_2), F \rangle + \langle \sigma(\beta_1 +\beta_2), F \rangle}$$
is zero unless $\pi(\alpha_1 +\alpha_2) = \sigma(\beta_1 +\beta_2)$. Let $\nu(\alpha):= (-1)^{\langle \alpha, \overline e_m \otimes \overline e_m \rangle}$. Now taking expectations over $\overline x,\overline y, \overline z, \overline x',\overline y',\overline z'$,  and noting that $\langle\alpha, x \otimes y \rangle = \langle \alpha x, y \rangle$, we obtain 
\begin{equation}\label{eqn:sim-term}
\begin{aligned}
  \Term_{u,v,w}(\alpha_1, \alpha_2, \beta_1,\beta_2) = (-1)^{\nu(\beta_1+\beta_2)}& \widehat A_v(\alpha_1)^2 \widehat A_v(\alpha_2)^2 \widehat A_w(\beta_1)^2 \widehat A_w(\beta_2)^2\\
& \Pr_{\overline x} \left[ \alpha_1 \overline x = 0 \wedge \alpha_2 \overline x = \alpha_2 e_m  \right] \cdot\\
& \Pr_{\overline x'} \left[ \beta_1 \overline x' = 0 \wedge \beta_2 \overline x' = \beta_2 e_m  \right]
\end{aligned}
\end{equation}

when $\pi(\alpha_1 +\alpha_2) = \sigma(\beta_1 +\beta_2)$ and $0$ otherwise.
Define:
\begin{align}
\Theta_0 &= \E_{u,v,w} \sum_{\substack{\rank(\alpha_1+\alpha_2), \rank(\beta_1+\beta_2) \leq k\\  \pi(\alpha_1+\alpha_2)= \sigma(\beta_1+\beta_2) \\ \nu(\beta_1+\beta_2) = 0 } }  \Term_{u,v,w}(\alpha_1, \alpha_2, \beta_1,\beta_2)\\
\Theta_1 &= \E_{u,v,w}\sum_{\substack{\rank(\alpha_1+\alpha_2), \rank(\beta_1+\beta_2) \leq k\\ \pi(\alpha_1+\alpha_2)= \sigma(\beta_1+\beta_2) \\  \nu(\beta_1+\beta_2) = 1 } }  \Term_{u,v,w}(\alpha_1, \alpha_2, \beta_1,\beta_2)\\
\Theta_2 &= \E_{u,v,w}\sum_{\substack {\max\left\{  \rank(\alpha_1+\alpha_2) ,\rank(\beta_1+\beta_2)\right\} > k\\ \pi(\alpha_1+\alpha_2)= \sigma(\beta_1+\beta_2)} }  \Term_{u,v,w}(\alpha_1, \alpha_2, \beta_1,\beta_2)
\end{align}
We lower bound $\Theta_0$ by $s^8$,  upper bound $|\Theta_1|$ by $\delta$ and $|\Theta_2|$ by $1/2^{k/2+1}$ below. Along with \eqref{eqn:indep}, this will prove \lref[Lemma]{lem:soundness-28}.

\subsubsection{Lower bound on $\Theta_0$}
Note that all terms in $\Theta_0$ are positive.
Now consider the term corresponding to $\alpha_1=\alpha_2=\beta_1=\beta_2=0$.
\begin{equation}
\E_{u,v,w}\widehat A^4_v(0) \widehat  A^4_w(0) = \E_u \left(\E_v \widehat  A^4_v(0) \right)^2 \geq \left(\E_{uv} \widehat  A_v(0)\right )^8 \geq s^8.
\end{equation}

\subsubsection{Upper bound on $|\Theta_1|$}
We can upper bound $|\Theta_1|$ by
\begin{equation}\label{eqn:fourier-decod}
\E_{u,v,w} \sum_{\substack{ \rank(\alpha_1+\alpha_2),\rank(\beta_1+\beta_2) \leq k,\\ \pi(\alpha_1+\alpha_2) = \sigma(\beta_1+\beta_2),\\ \nu(\beta_1+\beta_2)=1 }} \widehat A^2_v(\alpha_1) \widehat A^2_v(\alpha_2) \widehat A_w^2(\beta_1) \widehat A_w^2(\beta_2).
\end{equation}
Consider the following strategy for labeling vertices $u\in U$ and $v\in V$. For $u\in U$, pick a random neighbor $v$, choose $(\alpha_1,\alpha_2)$ with probability $\widehat A^2_v(\alpha_1) \widehat A^2_v(\alpha_2)$ and set its label to $\pi(\alpha_1+\alpha_2)$. For $w\in V$, choose $(\beta_1,\beta_2)$ with probability $\widehat A_w^2(\beta_1) \widehat A_w^2(\beta_2)$ and set its label to $\beta_1+\beta_2$. Since $A_w$ is folded, both $\beta_1$ and $\beta_2$ are symmetric and satisfies $C_v$. Since these constraints are homogeneous, $\beta_1+\beta_2$ is also symmetric and satisfies $C_v$. Also $\pi$ maps symmetric matrices to symmetric matrices. Note that \eqref{eqn:fourier-decod} gives the probability that a random edge $(u,w)$ of the label cover is satisfied by this labeling. Hence \eqref{eqn:fourier-decod} and $|\Theta_1|$ are upper bounded by $\delta$.

\subsubsection{Upper bound on $|\Theta_2|$}
Note that if the $\rank(\alpha) >k$, for any fixed $b$, $\Pr_{\overline x}[\alpha x = b] \leq 1/2^{k+1}$. All terms in $\Theta_2$ has $\max \{ \rank(\alpha_1), \rank(\alpha_2),\rank(\beta_1), \rank(\beta_2) \} > k/2.$ 
From \eqref{eqn:sim-term} we have that, for any fixed choice of $u,v,w$  each term in $\Theta_2$ has absolute value at most $1/2^{k/2+1}$. Since $A,B$ are $\{0,1\}$ valued functions, sum of their squared coefficients is upper bounded by $1$ (i.e. Parseval's inequality). Thus $|\Theta_2| \leq 1/2^{k/2+1}$.

\end{proof}

\begin{proof}[Proof of Theorem \ref{thm:2c8u}]
We already saw in \lref[Section]{sec:yes-28} that an YES instance of label cover is mapped to a $2$-colorable hypergraph.
Since $k= (\log N)^{1/8 - 2\epsilon}$ and $\delta = 2^{-(\log N)^{1/4 -2\epsilon}}$, $s \leq2^{-(\log N)^{1/8 -3\epsilon}}.$ Also the number of vertices in $\mathcal G$, 
$$n \leq N 2^{m^2} \leq N \cdot 2^{(\log N)^{10/4 +2\epsilon}}.$$
 From \lref[Lemma]{lem:soundness-28} and above,  a NO instance of label cover is mapped to a hypergraph $\mathcal G$ that has no independent set of relative size $2^{-(\log n)^{1/20-4\epsilon}}$.
\end{proof}

\fi

\ifcomment

\section{Old}

\begin{lemma}\label{lem:strong-smoothness}
For any $v \in V$ and symmetric matrix $\alpha \in \F_2^{m \times m}$  such that $k:=\rank(\alpha)$ and $k'_{u,v} := \rank(\pi^{(u,v)}\alpha)$
$$\Pr_{u \in N(v)}[ k'_{u,v} < k ] \leq 2^k\delta/2.$$
\end{lemma}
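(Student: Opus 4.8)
## Proof Proposal for \lref[Lemma]{lem:strong-smoothness}

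\begin{proof}[Proof plan]
The plan is to derive this "strong smoothness" statement from the basic smoothness guarantee of the Khot--Saket outer PCP verifier (\lref[Theorem]{thm:quad-label-cover}, property 3), which only controls the probability that a single fixed symmetric matrix of rank $\le k$ projects to zero, not the probability that its rank \emph{drops}. The key observation is that if $\pi^{(u,v)}\alpha$ has rank strictly less than $k = \rank(\alpha)$, then $\pi^{(u,v)}$ has a nontrivial kernel on the column space of $\alpha$; I want to convert this rank-drop event into a disjunction of "projects-to-zero" events over a small family of matrices derived from $\alpha$, each of which is individually symmetric and has rank $\le k$, and then apply the basic smoothness bound to each via a union bound.

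Concretely, here are the steps I would carry out. First, since $\alpha$ is symmetric of rank $k$ over $\F_2$, write $\alpha$ in a canonical form: there is a subset (of size at most $k$) of rank-one and rank-two symmetric "building blocks" whose sum is $\alpha$; more usefully, I would note that any symmetric $\beta$ with $\beta \preceq \alpha$ (meaning $\beta$ lies in the span of the canonical blocks of $\alpha$, equivalently $\beta$ is a "sub-form" of $\alpha$) is symmetric and has $\rank(\beta)\le k$. Second, observe that $\rank(\pi^{(u,v)}\alpha) < k$ is equivalent to the existence of a nonzero symmetric $\beta$ in this span with $\pi^{(u,v)}\beta = 0$ — because a linear map $\pi^{(u,v)}$ acting by $X \mapsto \rho X \rho^T$ drops the rank of $\alpha$ exactly when its kernel meets the relevant subspace nontrivially, and on symmetric matrices this nontrivial intersection can be witnessed by a nonzero symmetric element $\beta \preceq \alpha$. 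Third, since there are at most $2^k - 1$ nonzero elements $\beta$ in the span of the $\le k$ canonical blocks of $\alpha$, and each such $\beta$ is a fixed nonzero symmetric matrix of rank $\le k$, apply property 3 of \lref[Theorem]{thm:quad-label-cover} to each: $\Pr_{u\in N(v)}[\pi^{(u,v)}\beta = 0] \le \delta/2$. Finally, a union bound over these at most $2^k$ matrices gives $\Pr_{u\in N(v)}[\rank(\pi^{(u,v)}\alpha) < k] \le 2^k \cdot \delta/2$, as claimed.

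The main obstacle I anticipate is Step 2: precisely characterizing when the bilinear-congruence action $\alpha \mapsto \rho\alpha\rho^T$ (or the paper's $\pi^e$, which is the adjoint-type map with $\langle X\circ\pi^e,Y\rangle = \langle X, \pi^e Y\rangle$) decreases rank, and showing that a rank drop is always witnessed by a \emph{symmetric} sub-form $\beta \preceq \alpha$ rather than merely by some element of the kernel. Over $\F_2$ one has to be careful with alternating versus symmetric forms and with the fact that $v \mapsto v v^T$ is not linear; I would handle this by decomposing $\alpha = \sum_{i} c_i c_i^T + (\text{alternating part})$ using the structure of symmetric $\F_2$-matrices and checking that each summand behaves well under $\rho(\cdot)\rho^T$, so that $\rank(\rho\alpha\rho^T) < \rank(\alpha)$ forces a linear dependency among the images $\rho c_i$ (and analogously for the alternating blocks), which is exactly a nonzero symmetric $\beta \preceq \alpha$ in the kernel. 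The remaining steps (the counting bound $2^k - 1 \le 2^k$ and the union bound) are routine once this structural lemma is in place.
\end{proof}
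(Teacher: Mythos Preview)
Your overall plan---reduce the rank-drop event to a disjunction of $2^k-1$ ``projects to zero'' events and union-bound---is the right shape, and it matches the paper. But Step~2, as you state it, has a genuine gap: a rank drop of $\pi\alpha=\rho\alpha\rho^T$ is \emph{not} witnessed by some nonzero symmetric $\beta$ in the $\F_2$-span of the canonical blocks of $\alpha$ with $\pi\beta=0$. The map $y\mapsto y\otimes y$ is not $\F_2$-linear, so a linear dependency among the $\rho z_i$ does not translate into a vanishing $\F_2$-combination of the blocks $(\rho z_i)^{\otimes 2}$ (or of the hyperbolic blocks). Concretely, take $\alpha=z_1^{\otimes 2}+z_2^{\otimes 2}+z_3^{\otimes 2}$ with $z_1,z_2,z_3$ independent and an edge where $\rho z_3=\rho z_1+\rho z_2$ while $\rho z_1,\rho z_2$ are independent. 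Then $\rank(\alpha)=3$ but $\rho\alpha\rho^T=\rho z_1\otimes\rho z_2+\rho z_2\otimes\rho z_1$ has rank $2$, yet a direct check shows that every nonzero element of $\operatorname{span}\{z_1^{\otimes 2},z_2^{\otimes 2},z_3^{\otimes 2}\}$ has nonzero image under $\pi$. So your proposed witnesses do not exist in general, and the union bound over them is vacuous.

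The paper instead works at the \emph{vector} level. It uses the structure lemma you allude to (any symmetric rank-$k$ matrix over $\F_2$ decomposes as $\sum_{i\le s} z_i^{\otimes 2}+\sum_{j\le t}(z_{s+2j}\otimes z_{s+2j-1}+z_{s+2j-1}\otimes z_{s+2j})$ with $z_1,\dots,z_k$ independent), and the converse: if the $\rho z_i$ remain independent then $\rho\alpha\rho^T$ has the same decomposition and hence still has rank $k$. Thus $\rank(\rho\alpha\rho^T)<k$ forces a nonzero $y\in\operatorname{span}(z_1,\dots,z_k)$ with $\rho y=0$. There are only $2^k-1$ candidates for $y$, and for each fixed nonzero $y$ the vector-level smoothness from Khot--Saket's Theorem~7.1 gives $\Pr_u[\rho_u y=0]\le\delta/2$; the union bound finishes. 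If you want to stay within the matrix-level smoothness stated in \lref[Theorem]{thm:quad-label-cover}, the fix to your argument is to take as witnesses the rank-one matrices $\beta_y:=y\otimes y$ for nonzero $y\in\operatorname{span}(z_1,\dots,z_k)$: each is symmetric, nonzero, of rank $1\le k$, and $\pi\beta_y=0$ iff $\rho y=0$. These $\beta_y$ are not in the span of the canonical blocks, but there are still only $2^k-1$ of them, and that is all the union bound needs.
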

\begin{proof}
\begin{claim}
For a symmetric matrix $\alpha \in \F_2^{m\times m}$ and a number $k \leq m$, the following conditions are equivalent:
\begin{enumerate}
\item $ \rank(\alpha) = k$,
\item there are linearly independent $z_1, \cdots, z_k \in \F_2^m$ and numbers $s,t \geq 0$ with $k=s+2t$ such that
$$ \alpha = \sum_{i=1}^s z_i \otimes z_i + \sum_{j=1}^t ( z_{s+2j}\otimes z_{s+2j-1} +z_{s+2j-1}\otimes z_{s+2j}).$$
\end{enumerate}
 \end{claim}
 \begin{proof}
 $1 \Rightarrow 2$ is already proved in [KhotSaketb], Lemma 2.1. To see the reverse direction, assume that $\alpha$ has the said decomposition in terms of outer products of linearly independent $z_1 \cdots z_k$. It is easy to  come up with vectors $x_1,\cdots,x_k$ such that $\langle x_i , z_j\rangle = 1[i=j]$. Notice that $\forall i \in [s],~\alpha x_i = z_i$,  for $i = s+2j,~\alpha x_i = z_{s+2j-1}$ and for $i = s+2j-1, \alpha x_i = z_{s+2j}$. Hence $\rank(\alpha) = k$.
 \end{proof}
Fix an edge $(u,v)$. Suppose $\rank(\alpha)=k$ and $\rank(\pi \alpha) < k$. Then from the claim above, there are some linearly independent $z_1,\cdots z_k \in \F_2^m$ such that

$$ \alpha = \sum_{i=1}^s z_i \otimes z_i + \sum_{j=1}^t ( z_{s+2j}\otimes z_{s+2j-1} +z_{s+2j-1}\otimes z_{s+2j}).$$

Let $\rho:\F_2^m \rightarrow \F_2^r$ be the matrix of the linear operator such that $\pi \alpha = \rho \alpha \rho^T$ as given in the proof of Theorem 7.2 from Theorem 7.1 in [KhotSaketb] and $x_i = \rho z_i$. Then nce $z_i$ are linearly independent and $a_i$ is a non zero linear combination $y_u:=\sum_i a_i z_i \neq 0$. Since $y_u$ can take only $2^k-1$ values when $u$ is varied over neighbors of $v$. For each $y\neq 0$, from the smoothness condition in Theorem 7.1, $\Pr_u[\rho_u y = 0 ] \leq \delta/2$. Hence by union bound we have that
$\Pr_u[ k_{u,v} < k] \leq \Pr_u[ \rho_u y_u = 0 ] \leq (2^k-1)\delta/2$
$$ \Pr_u [y_u = y^*] \Pr_u [ \rho_u y_u = 0] \leq  \Pr_u[\rho y* = 0] \leq \delta/2.$$

\end{proof}

From Equation  \ref{eqn:fourier-exp} and Lemma \ref{lem:strong-smoothness}, we have that
\begin{align}
\E_{u,v,w}  \sum_{\substack{\alpha,\beta \text{ symmetric },\\ \rank(\alpha),\rank(\beta) \leq r } } 1[\substack{\rank(\rho\alpha) = \rank(\alpha) \leq r \wedge \\ \rank(\rho'\beta)= \rank(\beta) \leq r }]\Term_{u,v,w}(\alpha,\beta) -\delta/2 + \sum_{\substack{\alpha,\beta \text{ symmetric },\\ \rank(\alpha),\rank(\beta) > r } } \Term_{u,v,w}(\alpha,\beta) \leq 0
\end{align}

Now lets define

\begin{align}
\theta_0 &:= \E_{u,v,w} \sum_{\substack{\alpha,\beta \text{ symmetric },\\ \rank(\alpha),\rank(\beta) \leq k\\ \nu(\alpha+\beta)=0}} 1[\substack{\rank(\rho\alpha) = \rank(\alpha) \leq r \wedge \\ \rank(\rho'\beta)= \rank(\beta) \leq r }]\Term_{u,v,w}(\alpha,\beta)\\
\theta_1 &:= \E_{u,v,w} \sum_{\substack{\alpha,\beta \text{ symmetric },\\ \rank(\alpha),\rank(\beta) \leq k\\ \nu(\alpha+\beta)=1}} 1[\substack{\rank(\rho\alpha) = \rank(\alpha) \leq r \wedge \\ \rank(\rho'\beta)= \rank(\beta) \leq r }] \Term_{u,v,w}(\alpha,\beta)\\
\theta_2 &:= \E_{u,v,w} \sum_{\substack{\alpha,\beta \text{ symmetric },\\   (k< \rank(\beta) \leq r) ~ \vee ~ (k< \rank(\beta)\leq  r)  }} 1[\substack{\rank(\rho\alpha) = \rank(\alpha) \leq r \wedge \\ \rank(\rho'\beta)= \rank(\beta) \leq r }] \Term_{u,v,w}(\alpha,\beta)\\
\theta_3 &:= \E_{u,v,w} \sum_{\substack{\alpha,\beta \text{ symmetric },\\   (r< \rank(\beta) ) ~ \vee ~ (r< \rank(\beta))  }} 1[\substack{\rank(\rho\alpha)  \leq k \vee \\ \rank(\rho'\beta)\leq k }] \Term_{u,v,w}(\alpha,\beta)\\
\theta_4 &:= \E_{u,v,w} \sum_{\substack{\alpha,\beta \text{ symmetric },\\   (r< \rank(\beta) ) ~ \vee ~ (r< \rank(\beta))  }} 1[\substack{\rank(\rho\alpha)  > k \vee \\ \rank(\rho'\beta)> k }] \Term_{u,v,w}(\alpha,\beta)
\end{align}

\begin{claim}
$\theta_0 \geq s^4$
\end{claim}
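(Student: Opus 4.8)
The plan is to throw away all but one Fourier term and then apply Jensen's inequality twice, exactly as in the lower bound on $\Theta_0$ in the proof of \lref[Lemma]{lem:soundness-28}. First I would check that every summand in the definition of $\theta_0$ is nonnegative: on the range $\nu(\alpha+\beta)=0$ the sign $(-1)^{\nu(\alpha+\beta)}$ occurring in $\Term_{u,v,w}(\alpha,\beta)$ is $+1$; the Fourier weights appear as the squares $\widehat A_v(\alpha)^2$ and $\widehat A_w(\beta)^2$; the factor $\Pr_{\overline x}[(\rho_1\alpha)^T\overline x=0\wedge(\rho_2\beta)^T\overline x=\beta^T\overline e_m]$ is a probability; and the rank indicator $1[\cdots]$ is $\{0,1\}$-valued. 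Hence $\theta_0$ is at least its single term with $\alpha=\beta=0$, for which all the conditions in the summation hold trivially ($\rank(0)\le k$, $\rank(\rho\cdot 0)=\rank(0)$, $\nu(0)=0$) and the probability factor equals $1$; so that term is $\widehat A_v(0)^2\,\widehat A_w(0)^2$ and
\[
\theta_0 \;\ge\; \E_{u,v,w}\,\widehat A_v(0)^2\,\widehat A_w(0)^2 .
\]

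Next, since conditioned on $u$ the vertices $v$ and $w$ are independent uniform neighbours of $u$, the right-hand side equals $\E_u\bigl(\E_{v\in N(u)}\widehat A_v(0)^2\bigr)^2$. Using $\widehat A_v(0)=\E_X A_v(X)\ge 0$, I would apply Jensen's inequality once inside (convexity of $t\mapsto t^2$, over the choice of $v$) and once outside (convexity of $t\mapsto t^4$, over the choice of $u$):
\[
\E_u\bigl(\E_{v\in N(u)}\widehat A_v(0)^2\bigr)^2 \;\ge\; \E_u\bigl(\E_{v\in N(u)}\widehat A_v(0)\bigr)^4 \;\ge\; \bigl(\E_u\E_{v\in N(u)}\widehat A_v(0)\bigr)^4 .
\]
By the bi-regularity of the outer instance from \lref[Theorem]{thm:quad-label-cover}, drawing $u$ uniformly and then a uniform neighbour $v$ gives the uniform distribution over $E$, so $\E_u\E_{v\in N(u)}\widehat A_v(0)=\E_{(u,v)\in E}\E_X A_v(X)$ is exactly the relative size $s$ of the independent set $A$. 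Chaining the two displays yields $\theta_0\ge s^4$.

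The argument is essentially bookkeeping, so I do not expect a genuine obstacle; the only point that needs a little care is the last identity $\E_{(u,v)\in E}\widehat A_v(0)=s$, which uses the bi-regularity of the label cover together with the convention that all blocks $\F_2^{m\times m}/\mathcal H_v$ carry equal weight in $\mathcal V$ (equivalently, have a common size). This is precisely the point used without comment in the lower bound on $\Theta_0$ in \lref[Lemma]{lem:soundness-28}.
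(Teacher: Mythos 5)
Your argument is correct and follows the same route as the paper's: since $\nu(\alpha+\beta)=0$ all summands are nonnegative, so one keeps only the $\alpha=\beta=0$ term, which is at least $s^4$ by two applications of Jensen. You spell out the Jensen steps (and the implicit assumption $\E_{(u,v)\in E}\widehat A_v(0)=s$) that the paper's terse phrase ``which is $s^4$'' glosses over, but this is exactly the argument the paper uses for the analogous $\Theta_0$ lower bound in \lref[Lemma]{lem:soundness-28}.
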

\begin{proof}
Since $\nu(\alpha+\beta)=0$, all terms in $\theta_0$ are positive and in particular it contains the term $\alpha=\beta=0$ which is $s^4$. 
\end{proof}

\begin{claim}
$|\theta_1| \leq 2\delta$
\end{claim}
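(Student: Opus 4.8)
The plan is to carry out the now-standard Fourier-analytic soundness analysis for quadratic-code inner verifiers, in the style of Khot--Saket. Fix a subset $A\subseteq\mathcal V$ of relative size $s$, and for each $v\in V$ let $A_v:\F_2^{m\times m}\to\{0,1\}$ be its indicator, folded so as to be constant on cosets of $\mathcal H_v$; consequently $\widehat A_v(\alpha)\neq 0$ forces $\alpha$ to be symmetric and to satisfy $C_v$. That $A$ is independent means precisely that it never contains all eight vertices queried by $\mathcal T_{2,8}$, i.e.\ that $\Theta:=\E_{u,v,w}\E_{X_i,Y_i\sim\mathcal T_{2,8}}\prod_{i=1}^{4}A_v(X_i)A_w(Y_i)=0$. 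First I would Fourier-expand the eight factors and average over the free base points $X_1,X_2,Y_1,Y_2$; this pairs up the queries and leaves, for each $(\alpha_1,\alpha_2,\beta_1,\beta_2)$, the product $\widehat A_v(\alpha_1)^2\widehat A_v(\alpha_2)^2\widehat A_w(\beta_1)^2\widehat A_w(\beta_2)^2$ times characters evaluated at the shifts $\overline x\otimes\overline y$, $(\overline x+\overline e_m)\otimes\overline z$, $\overline x'\otimes\overline y'$, $(\overline x'+\overline e_m)\otimes\overline z'$, $F\circ\pi$, $F\circ\sigma$ and $\overline e_m\otimes\overline e_m$. Averaging over the uniform $F$ annihilates every term except those with $\pi(\alpha_1+\alpha_2)=\sigma(\beta_1+\beta_2)$, and averaging over $\overline y,\overline z,\overline y',\overline z'$ (using $\langle\alpha,x\otimes y\rangle=\langle\alpha x,y\rangle$) converts the surviving character averages into the probabilities $\Pr_{\overline x}[\alpha_1\overline x=0\wedge\alpha_2\overline x=\alpha_2\overline e_m]$ and $\Pr_{\overline x'}[\beta_1\overline x'=0\wedge\beta_2\overline x'=\beta_2\overline e_m]$, while the $\overline e_m\otimes\overline e_m$ factors contribute the sign $(-1)^{\langle\beta_1+\beta_2,\,\overline e_m\otimes\overline e_m\rangle}$; this yields the closed form for the generic summand $\Term_{u,v,w}(\alpha_1,\alpha_2,\beta_1,\beta_2)$.

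Next I would split $\Theta=\Theta_0+\Theta_1+\Theta_2$ according to whether: $\rank(\alpha_1+\alpha_2),\rank(\beta_1+\beta_2)\le k$ and the $(m,m)$-entry of $\beta_1+\beta_2$ is $0$ (so every summand of $\Theta_0$ is nonnegative); the same rank bound holds but the $(m,m)$-entry of $\beta_1+\beta_2$ is $1$ ($\Theta_1$); or $\max\{\rank(\alpha_1+\alpha_2),\rank(\beta_1+\beta_2)\}>k$ ($\Theta_2$). Since $\Theta=0$, this gives $\Theta_0\le|\Theta_1|+|\Theta_2|$, so the lemma reduces to showing $\Theta_0\ge s^8$, $|\Theta_1|\le\delta$ and $|\Theta_2|\le 2^{-(k/2+1)}$. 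For $\Theta_0$ I would retain only the all-zero term $\E_{u,v,w}\widehat A_v(0)^4\widehat A_w(0)^4$ and apply the power-mean inequality together with bi-regularity, using $\E_{u,v}\widehat A_v(0)\ge s$. For $\Theta_2$ I would note that $\rank(\alpha_1+\alpha_2)>k$ forces $\max\{\rank(\alpha_1),\rank(\alpha_2)\}>k/2$ (and symmetrically on the $\beta$-side), that a matrix of rank $>k/2$ makes the associated probability factor in $\Term_{u,v,w}$ at most $2^{-(k/2+1)}$, and then bound the other probability factor by $1$ and each sum $\sum_{\alpha}\widehat A_v(\alpha)^2$ by $1$ via Parseval.

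The step I expect to be the main obstacle is the bound $|\Theta_1|\le\delta$, which is the only place where the special soundness of the Khot--Saket outer verifier \lref[Theorem]{thm:quad-label-cover} is used. Discarding the sign and the probability factors bounds $|\Theta_1|$ by $\E_{u,v,w}\sum\widehat A_v(\alpha_1)^2\widehat A_v(\alpha_2)^2\widehat A_w(\beta_1)^2\widehat A_w(\beta_2)^2$ over the same restricted index set, and I would interpret this as the success probability of the randomized Label-Cover assignment that labels $u\in U$ by picking a random neighbour $v$, drawing $(\alpha_1,\alpha_2)$ with probability $\widehat A_v(\alpha_1)^2\widehat A_v(\alpha_2)^2$ and setting the label to $\pi(\alpha_1+\alpha_2)$, and labels $w\in V$ by drawing $(\beta_1,\beta_2)$ with probability $\widehat A_w(\beta_1)^2\widehat A_w(\beta_2)^2$ and setting the label to $\beta_1+\beta_2$. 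The delicate point is verifying that this assignment satisfies every hypothesis of the soundness clause of \lref[Theorem]{thm:quad-label-cover}: folding makes $\beta_1,\beta_2$ symmetric and $C_v$-satisfying, homogeneity of $C_v$ passes this on to $\beta_1+\beta_2$, $\pi$ maps symmetric matrices to symmetric matrices, the rank restriction together with the fact that projections do not increase rank delivers the rank-$\le k$ requirement on both sides, and --- crucially --- the restriction that the $(m,m)$-entry of $\beta_1+\beta_2$ equals $1$, which defines $\Theta_1$, is exactly the condition forcing the $V$-labels to have $(m,m)$-entry $1$. Fixing the internal randomness optimally and invoking the soundness guarantee then bounds the satisfied fraction by $\delta$, so $|\Theta_1|\le\delta$; combined with $\Theta=0$ and the other two estimates, this yields $s^8\le\delta+2^{-(k/2+1)}$ as claimed.
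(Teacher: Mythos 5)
Your proof of $|\Theta_1|\le\delta$ reproduces the paper's argument: drop the sign and the two probability factors, view the residual sum \eqref{eqn:fourier-decod} as the expected fraction of Label-Cover edges satisfied by the randomized decoding strategy (label $u$ by $\pi(\alpha_1+\alpha_2)$ through a random neighbour $v$, label $w$ by $\beta_1+\beta_2$), and check that the decoded $V$-side labels are symmetric, of rank at most $k$, $C_v$-satisfying, and have $(m,m)$-entry $1$ --- exactly the hypotheses in the soundness clause of \lref[Theorem]{thm:quad-label-cover}. This is the same approach as the paper's; the factor $2\delta$ in the stated claim is just slack from an earlier (commented-out) draft, and your $\delta$ matches the paper's final bound.
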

\begin{proof}
$$|\theta_1| \leq \E_{u,v,w} 2\sum_{\substack{\alpha,\beta \text{ symmetric },\\ \rank(\alpha),\rank(\beta) \leq k\\ \nu(\alpha)=1}} A_v^2(\alpha) A_w^2(\beta)$$
We can do the Fourier decoding argument and bound the above by the soundness $\delta$ of the Label cover.
\end{proof}

\begin{claim}
$|\theta_2|,|\theta_4| \leq  1/2^k$
\end{claim}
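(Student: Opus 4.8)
The plan is to prove both inequalities with a single estimate that applies uniformly to every summand of $\theta_2$ and of $\theta_4$. The common feature of those summands is that on at least one of the two sides the \emph{projected} matrix has rank strictly larger than $k$. First I would extract this. In $\theta_4$ the indicator $1[\rank(\rho\alpha)>k\,\vee\,\rank(\rho'\beta)>k]$ supplies it directly. In $\theta_2$ the indicator forces $\rank(\rho\alpha)=\rank(\alpha)$ and $\rank(\rho'\beta)=\rank(\beta)$, while the summation range forces $\rank(\alpha)>k$ or $\rank(\beta)>k$; combining, again $\max\{\rank(\rho\alpha),\rank(\rho'\beta)\}>k$ in every term. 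So in each summand I may single out one side — say, after relabelling, the $\alpha$ side — with $\rank(\rho\alpha)\ge k+1$.

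Next I would bound the probability factor in the closed form $\Term_{u,v,w}(\alpha,\beta)=(-1)^{\nu(\alpha+\beta)}\,\widehat A_v(\alpha)^2\,\widehat A_w(\beta)^2\,\Pr_{\overline x}\!\big[(\rho\alpha)^T\overline x=0\,\wedge\,(\rho'\beta)^T\overline x=\beta^T\overline e_m\big]$. For any matrix $M$ and any target vector $b$, the solution set of the affine system $M^T\overline x=b$ is either empty or a coset of the kernel of $M^T$, which has codimension $\rank(M)$; hence $\Pr_{\overline x}[M^T\overline x=b]\le 2^{-\rank(M)}$. Dropping the conjunct coming from the light side and applying this with $M=\rho\alpha$ — or, if the $\beta$ side is the heavy one, with $M=\rho'\beta$ and $b=\beta^T\overline e_m$ — bounds the probability by $2^{-(k+1)}$ (and if the two linear conditions are inconsistent the probability is simply $0$). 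Therefore $|\Term_{u,v,w}(\alpha,\beta)|\le 2^{-(k+1)}\,\widehat A_v(\alpha)^2\,\widehat A_w(\beta)^2$ uniformly over all summands of $\theta_2$ and $\theta_4$ and all $u,v,w$.

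Finally I would enlarge the index set to all pairs $(\alpha,\beta)$, pull the factor $2^{-(k+1)}$ out of the sum, and apply Parseval: since each $A_v$ is $\{0,1\}$-valued, $\sum_{\alpha}\widehat A_v(\alpha)^2=\E_X[A_v(X)^2]=\E_X[A_v(X)]\le 1$, and likewise for $A_w$. This gives
\[
|\theta_2|,\;|\theta_4|\;\le\;2^{-(k+1)}\,\E_{u,v,w}\!\Big(\sum_{\alpha}\widehat A_v(\alpha)^2\Big)\!\Big(\sum_{\beta}\widehat A_w(\beta)^2\Big)\;\le\;2^{-(k+1)}\;<\;\frac{1}{2^{k}},
\]
which is the claimed bound.

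I do not anticipate a genuine obstacle here: the whole argument is a rank-versus-collision-probability estimate followed by Parseval, exactly as in the bound on $|\Theta_2|$ in \lref[Lemma]{lem:soundness-28}. The one point that needs care is the bookkeeping in the first step — one must combine the summation condition with the indicator correctly, so that the side extracted has its \emph{image} under the projection, and not merely the matrix itself, of rank $>k$; it is this, rather than the rank of $\alpha$ or $\beta$ alone, that makes the relevant collision probability at most $2^{-(k+1)}$.
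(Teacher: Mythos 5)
Your proposal is correct and follows essentially the same line as the paper's own (terse) proof: bound the single affine-intersection probability by $2^{-\max\{\rank(\rho\alpha),\rank(\rho'\beta)\}}\le 2^{-(k+1)}$, then pull the constant out and apply Parseval to $\sum_\alpha\widehat A_v(\alpha)^2\le 1$. The one thing you spell out that the paper leaves implicit, and correctly so, is that it is the rank of the \emph{projected} matrix that must exceed $k$ (supplied directly by the indicator in $\theta_4$, and by the indicator forcing $\rank(\rho\alpha)=\rank(\alpha)$ together with the summation range in $\theta_2$), which is exactly what makes the collision-probability estimate go through.
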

\begin{proof}
Since 
$$\Pr\left[ \overline x \in \ker((\alpha \sigma)^*) \cap \left(  \beta e_m + \ker((\beta \sigma')^*)\right) \right] \leq \frac{1}{2^{\max \{ \rank(\alpha \sigma^*), \rank(\beta \sigma^*) \}}}$$
the claim follows.
\end{proof}

\begin{conjecture}
The outer verifier satisfies the additional property that $\Pr_{u \in N(v)}[\rank(\rho\alpha) < k] \leq O(\delta)$ for matrix $\alpha$ with rank at least $r$. Then
$|\theta_4|\leq O(\delta)$ 
\end{conjecture}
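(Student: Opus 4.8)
I would isolate inside $\theta_4$ the small collection of Fourier terms for which the elementary rank estimate is too weak, and control precisely that collection with the assumed property — call it (P): for every $v$ and every symmetric $\alpha\in\F_2^{m\times m}$ with $\rank(\alpha)\ge r$, $\Pr_{u\in N(v)}[\rank(\rho_{u,v}\alpha)<k]\le O(\delta)$. The rest of $\theta_4$ will be handled by a crude rank bound, and the whole estimate is then two applications of Parseval sandwiching one application of (P).

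\textbf{Step 1: the easy part of $\theta_4$.} From \eqref{eqn:sim-term}, for any fixed $u,v,w$,
\[
\bigl|\Term_{u,v,w}(\alpha,\beta)\bigr|\;\le\;\widehat A_v(\alpha)^2\,\widehat A_w(\beta)^2\cdot 2^{-\max\{\rank(\rho\alpha),\,\rank(\rho'\beta)\}},
\]
since the joint probability over $\overline x$ is at most either of its two marginals. Hence every summand of $\theta_4$ with $\max\{\rank(\rho\alpha),\rank(\rho'\beta)\}\ge k$ contributes at most $2^{-k}\,\widehat A_v(\alpha)^2\widehat A_w(\beta)^2$, and by Parseval the total over all such summands is at most $2^{-k}$, which is $2^{-\Omega(k)}$ and is absorbed into the slack already present in the target soundness bound. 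It therefore remains to control the residual sum $\theta_4'$ over the summands with $\rank(\alpha)>r$ or $\rank(\beta)>r$ and $\max\{\rank(\rho\alpha),\rank(\rho'\beta)\}<k$: at least one side carries a rank-$>r$ matrix whose projection has collapsed below $k$.

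\textbf{Step 2: the residual sum via (P).} Split $\theta_4'=\theta_4^{(\alpha)}+\theta_4^{(\beta)}$ according to which side carries that collapsed rank-$>r$ matrix (if both do, charge the term to the $\alpha$-side; this costs only a factor $2$). For $\theta_4^{(\alpha)}$, bound $|\Term_{u,v,w}(\alpha,\beta)|\le\widehat A_v(\alpha)^2\widehat A_w(\beta)^2$, sum out $\beta$ and $w$ by Parseval, and use bi-regularity of \lref[Theorem]{thm:quad-label-cover} to rewrite $\E_{u,v}$ as "$v$ uniform, then $u\in N(v)$"; this gives
\[
\bigl|\theta_4^{(\alpha)}\bigr|\;\le\;\E_{v}\ \sum_{\substack{\alpha\ \mathrm{sym.}\\ \rank(\alpha)>r}}\widehat A_v(\alpha)^2\ \Pr_{u\in N(v)}\!\bigl[\rank(\rho_{u,v}\alpha)<k\bigr].
\]
By (P) the inner probability is $O(\delta)$ for each such $\alpha$, and $\sum_\alpha\widehat A_v(\alpha)^2\le1$ by Parseval, so $|\theta_4^{(\alpha)}|\le O(\delta)$. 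The term $\theta_4^{(\beta)}$ is identical with $\rho'=\rho_{u,w}$ and (P) applied at $w$. Adding everything, $|\theta_4|\le O(\delta)$.

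\textbf{Step 3: the obstacle.} The derivation above is routine bookkeeping; the actual content is property (P), which is exactly why the statement is conditional. The smoothness clause of \lref[Theorem]{thm:quad-label-cover} only guarantees $\Pr_e[\pi_e(M_v)=0]\le\delta/2$ for \emph{low-rank} $M_v$ — it forbids a complete collapse of a rank-$\le k$ matrix — whereas (P) concerns a \emph{partial} collapse (to rank $<k$) of a \emph{high-rank} matrix. The naive route, decomposing the rank-$<k$ image into roughly $2^{\Theta(kr)}$ vector/coset patterns and union-bounding via smoothness as in \lref[Lemma]{lem:strong-smoothness}, loses a multiplicative $2^{\Theta(kr)}$ and dwarfs the $O(\delta)$ budget. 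A genuine proof of (P) would instead have to exploit the concrete structure of the Khot--Saket projections — $\rho_{u,v}$ arises as the $\otimes$-square of a smooth linear map on $\F_2^m$ — and extract smoothness for the induced action on symmetric matrices of intermediate rank; establishing this is the missing ingredient for the $2$-colorable $4$-uniform reduction, and I expect it to be the main difficulty.
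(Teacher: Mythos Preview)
The paper provides \emph{no} proof of this statement: it appears as a bare conjecture in the commented-out ``Old'' section, with no argument following it. There is therefore nothing in the paper to compare against, and one can only assess your argument on its own terms.

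Your reasoning is sound, but there is a mismatch with what $\theta_4$ literally is in the paper. As defined there, $\theta_4$ already carries the indicator $\mathbf{1}[\rank(\rho\alpha)>k\ \vee\ \rank(\rho'\beta)>k]$, so every surviving summand satisfies $\max\{\rank(\rho\alpha),\rank(\rho'\beta)\}>k$. Your Step~1 therefore disposes of \emph{all} of $\theta_4$ via the $2^{-k}$ bound --- and indeed the paper's own Claim immediately preceding the conjecture proves exactly $|\theta_4|\le 2^{-k}$ in one line by the same estimate. Your ``residual'' $\theta_4'$ in Step~2 is empty for $\theta_4$ as written, and property~(P) is never needed.

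The quantity that genuinely requires~(P) is $\theta_3$: the high-rank terms ($\rank(\alpha)>r$ or $\rank(\beta)>r$) whose projections have collapsed below rank~$k$. The $\theta$-decomposition in this abandoned section is visibly riddled with typos (the duplicated $\rank(\beta)$ in the summation ranges, the inconsistent $\vee$'s in the indicators for $\theta_3$ and $\theta_4$), and it is clear the conjecture was meant to address that collapsed-rank case. For \emph{that} quantity your Step~2 is exactly right: bound each term by $\widehat A_v(\alpha)^2\widehat A_w(\beta)^2$, sum out the non-collapsing side by Parseval, use bi-regularity to rewrite $\E_{u,v}$ as ``$v$ uniform, then $u\in N(v)$'', apply~(P) termwise, and finish with Parseval. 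Your Step~3 correctly identifies~(P) itself as the real content and the reason the statement was left as a conjecture; the bookkeeping you supply is the straightforward part.
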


Hence we have that
$$ s^4- 2\delta - 1/2^k \leq \delta/2.$$
Parameters are set exactly as in [KhotSaketb] to get the hardness result.
\fi
\section{Acknowledgements}
The author would like to thank Prahladh Harsha, Subhash Khot, Rishi Saket and Srikanth Srinivasan for helpful discussions. 

{\small
\bibliographystyle{prahladhurl}
\bibliography{jrnl-names-abb,prahladhbib,crossref,doc}
}

\end{document}